\def\BibTeX{{\rm B\kern-.05em{\sc i\kern-.025em b}\kern-.08em
    T\kern-.1667em\lower.7ex\hbox{E}\kern-.125emX}}
\newtheorem{theorem}{Theorem}
\theoremstyle{remark}
\theoremstyle{proposition}
\newtheorem{proposition}{Proposition}
\theoremstyle{Assumption}
\theoremstyle{Lemma}
\begin{document}

\graphicspath{{figures/}}

\title{TrimCaching: Parameter-sharing AI Model Caching in Wireless Edge Networks\\
\thanks{Guanqiao Qu, Zheng Lin, Xianhao Chen, and Kaibin Huang are with the Department of Electrical and Electronic Engineering, University of Hong Kong, Pok Fu Lam, Hong Kong SAR, China. (e-mail: gqqu@eee.hku.hk; linzheng@eee.hku.hk; xchen@eee.hku.hk; huangkb@eee.hku.hk). Fangming Liu is with Peng Cheng Laboratory, Shenzhen, China, and also with the Huazhong University of Science and Technology, Wuhan, Hubei, China. (e-mail: fangminghk@gmail.com). \textit{(Corresponding author: Xianhao Chen.) This paper has been accepted by ICDCS 2024.}
}
}
\author{Guanqiao Qu,~\IEEEmembership{Graduate Student Member,~IEEE}, Zheng Lin,~\IEEEmembership{Graduate Student Member,~IEEE}, \\ Fangming Liu,~\IEEEmembership{Senior Member,~IEEE}, Xianhao Chen,~\IEEEmembership{Member,~IEEE}, Kaibin Huang,~\IEEEmembership{Fellow,~IEEE}
}

\maketitle

\begin{abstract}
Next-generation mobile networks are expected to facilitate fast AI model downloading to end users. By caching models on edge servers, mobile networks can deliver models to end users with low latency, resulting in a paradigm called \textit{edge model caching}. In this paper, we develop a novel model placement scheme, called parame\underline{t}e\underline{r}-shar\underline{i}ng \underline{m}odel caching (TrimCaching). TrimCaching exploits the key observation that a wide range of AI models, such as convolutional neural networks or large language models, can share a significant proportion of parameter blocks containing reusable knowledge, thereby improving storage efficiency. To this end, we formulate a parameter-sharing model placement problem to maximize the cache hit ratio in multi-edge wireless networks by balancing the fundamental tradeoff between storage efficiency and service latency. We show that the formulated problem is a submodular maximization problem with submodular constraints, for which no polynomial-time approximation algorithm exists. To overcome this challenge, we study an important special case, where a small fixed number of parameter blocks are shared across models, which often holds in practice. In such a case, a polynomial-time algorithm with $\left(1-\epsilon\right)/2$-approximation guarantee is developed. Subsequently, we address the original problem for the general case by developing a greedy algorithm. Simulation results demonstrate that the proposed TrimCaching framework significantly improves the cache hit ratio compared with state-of-the-art content caching without exploiting shared parameters in AI models.
\end{abstract}
\begin{IEEEkeywords}
Edge AI model caching, edge computing, edge intelligence, 6G, model downloading.
\end{IEEEkeywords}

\section{Introduction}
In the era of Artificial Intelligence of Things (AIoI), there is an ongoing trend for pushing Artificial Intelligence (AI) services from the cloud to local devices due to the stringent latency and privacy requirements of diverse AI-empowered applications~\cite{lin2024efficient,lyu2023optimal}. 
For instance, autonomous driving must perceive the surrounding environments in a timely fashion~\cite{9284628,chen2024vehicle}; 
the emerging large language models (LLMs) often involve privacy-sensitive personal data for human-machine interactions~\cite{9296274,nguyen2022federated}. On the other hand, due to the sheer number of AI applications and the ever-growing size of AI models (e.g., Google's on-device LLM Gemini Nano-2 with 3.25 billion parameters), it is impractical for users to store every AI model locally. For this reason, next-generation mobile networks aim to support rapid model downloading for mobile users to realize ``on-device AI'' with better latency and privacy guarantees. By ensuring efficient model downloading, the 6th generation (6G) mobile networks enable mobile users to enjoy versatile on-device AI services with low latency without exhausting their data storage capacities.
\par

Conventional model downloading features the delivery of AI models from the remote cloud to local devices, which is infeasible for latency-sensitive services. For example, as specified by 3GPP 5G technical specifications, both autonomous vehicles and robots may need to accomplish AI model downloading within 1 second~\cite{3gpp.22.874}. Observing the large-sized models and the limited bandwidth of remote cloud centers, this kind of fast model downloading may only be achieved by placing AI models in edge networks close to users, leading to a paradigm called ``edge model caching''. However, unlike cloud centers, an edge server can only store a limited number of popular models under the storage capacities of edge servers. To enhance model downloading performance, one fundamental research problem, therefore, is model placement, e.g., how can we optimally place AI models on edge servers to maximize the cache hit ratio for model downloading under the latency requirements of various AI services? 

In this paper, we identify an AI model placement problem by observing parameter sharing among AI models. Shared parameters pervade AI models nowadays, which can be exploited to store them more efficiently at the network edge. A broad range of AI models, such as convolutional neural networks (CNNs) or LLMs, can share a considerable percentage of parameters since they can be created from the same pre-trained models~\cite{padmanabhan2022gemel}. For instance, layer freezing is a very classic transfer learning or multi-task learning method because bottom layers in deep neural networks, such as convolution layers, usually share common knowledge reusable for different downstream tasks\cite{tenser2023,zhuang2020comprehensive,Guo_2019_CVPR}. In the era of LLMs, parameter-efficient fine-tuning (PEFT) (e.g., LoRA\cite{hu2022lora}) is highly effective in fine-tuning foundation models to downstream or customized models, where the core idea is to freeze the significant proportion of parameters (e.g., more than 99\% in LoRA for LLMs) to save computing and memory resources. Today's engineers often fine-tune pre-trained models (e.g., pre-trained CNNs on the Pytorch platform or open-source foundation models such as GPT-3) on task-specific datasets to create new services. To demonstrate this phenomenon, Fig. \ref{fig_intro} shows that the inference accuracy in ResNet50 only degrades slightly as the number of shared layers grows. Even if the first 90\% of trainable layers, up to layer 97, are frozen, the average accuracy degradation is only about 4.7\% compared with the full-layer fine-tuning. This demonstrates downstream models can share a significant proportion of parameters from pre-trained models, making storage more efficient.\par

\begin{figure}[!t]
	\centering
\includegraphics[width=1.9in]{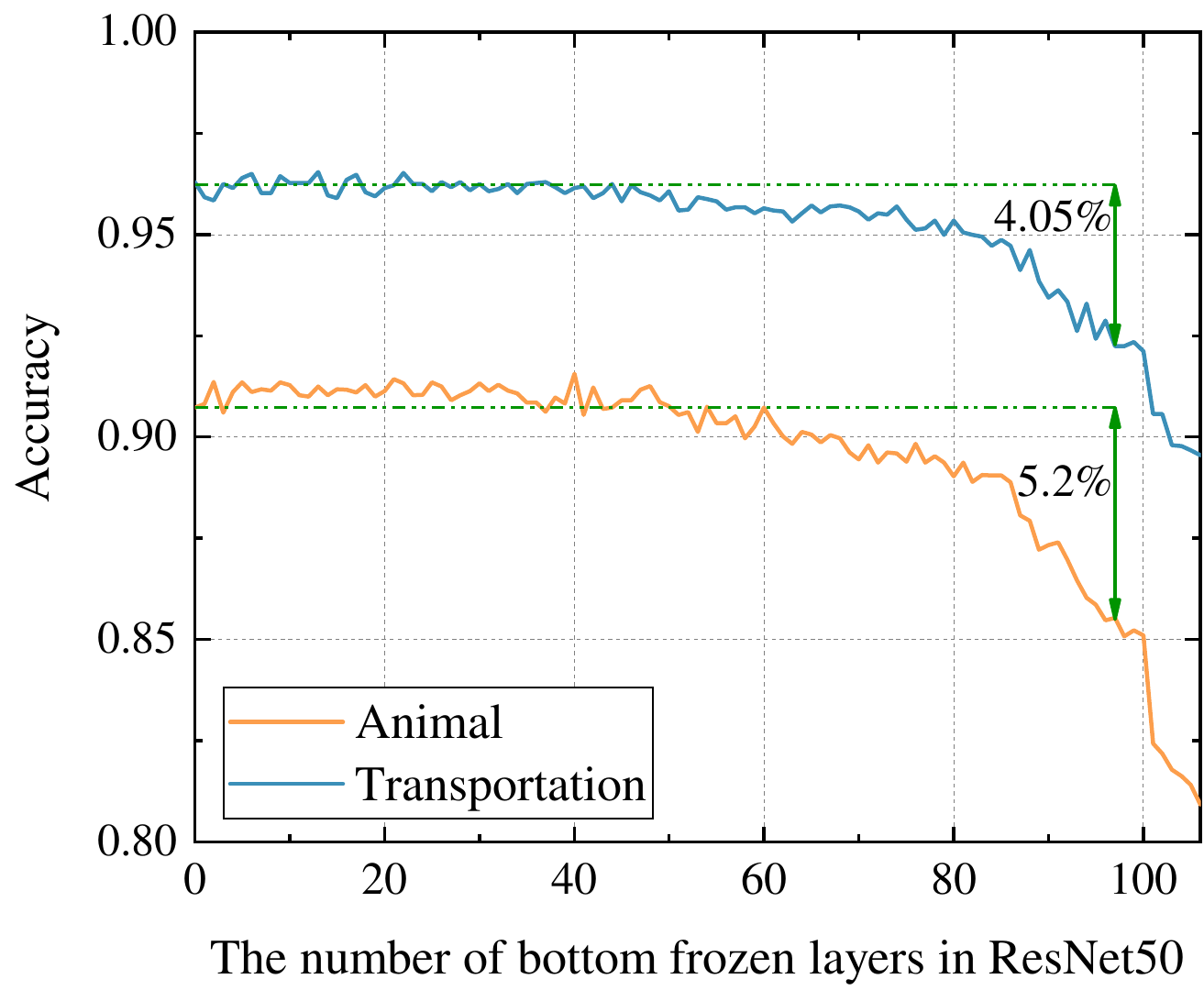}\label{fig_intro_50}
	\caption{Inference accuracy v.s the number of bottom frozen layers of fine-tuned models. Based on an original model ResNet50 \cite{he2016deep} pre-trained on CIFAR100 \cite{krizhevsky2009learning}, we fine-tune it for two downstream tasks, i.e., ``transportation'' and ``animal'', respectively. The class ``airplane", ``automobile", ``ship", and ``truck" in CIFAR10 \cite{krizhevsky2009learning} are summarized into a superclass ``transportation'', while the classes ``bird", ``cat", ``deer", ``dog", ``frog", and ``horse" are summarized into the superclass ``animal''. This implies downstream or personalized models can have a significant proportion of shared model parameters, given that fine-tuning techniques are widely adopted nowadays.}\label{fig_intro}
\end{figure}

By taking advantage of the aforementioned salient property, we will design a parame\underline{t}e\underline{r}-shar\underline{i}ng \underline{m}odel caching  (TrimCaching) framework for edge model caching. Specifically, given a set of wireless edge servers, we address the problem of placing models on edge servers to maximize the cache hit ratio for AI model downloading~\cite{6600983,7439797,8169053,7797148}. While similar problems have been studied in content placement, the shared parameters in AI model caching distinguish our scheme from traditional content placement problems for general contents (e.g., movies), resulting in a novel and very challenging optimization problem. The main contributions of this paper are summarized as follows.

\begin{enumerate}
	\item We define the parameter-sharing model caching problem. In multi-edge scenarios, we formulate the model placement problem to maximize the cache hit ratio (e.g., the ratio of downloading requests that can be served within delay requirements) under the storage capacity constraints of edge servers. 
    \item We show that the resultant problem is a submodular maximization problem with submodular constraints. After problem mapping, we conclude that there is no polynomial-time algorithm to solve it with a constant approximation ratio due to the submodular constraints resulting from shared parameter blocks.
	\item We investigate a special case of the proposed problem with a small fixed number of shared parameter blocks independent of the problem scale (e.g., the scale of the AI model library). We argue this special case often holds in reality. We develop a successive greedy and dynamic programming (DP)-based algorithm for a $\left(1-\epsilon\right)/2$-approximate solution with polynomial time complexity.
	\item We design a greedy algorithm for the original problem for the general case. Although a constant approximation guarantee cannot be achieved as alluded to earlier, we show that the algorithm is still efficient and effective through performance evaluation.
\end{enumerate}\par 
The rest of this paper is organized as follows. Section II introduces the related work. Section III elaborates on the system model and the TrimCaching framework. The problem formulation is presented in Section IV. The model placement algorithm for the special case is developed in Section V. The model placement algorithm for the general case is provided in Section VI. Section VII presents the simulation results, and Section VIII concludes the paper.\par

\section{Related Work}
Content caching in wireless edge networks has attracted significant attention since it brings contents closer to end users, leading to a field called edge caching~\cite{8291028,7565183,7565185,10068141}. Specifically, popular files can be pre-cached in wireless edge servers, such as small base stations \cite{7572146}, thereby enabling content downloading to users with low latency.
One fundamental research problem in edge caching is how to place contents on edge servers for enhancing the quality of experience (QoE) of end users, which is called the content placement problem~\cite{9126265,8422142}.
Due to the overlapping coverage of wireless edge servers, users can download files from any of the nearby edge servers with the desired contents. This class of QoE maximization problems has been identified as submodular maximization problems with knapsack constraints~\cite{6600983,7439797,8169053,7797148}. 
In these schemes, since contents are independent of each other, the storage constraints are naturally knapsack constraints. Simple greedy methods are usually effective in solving such constrained submodular maximization problems with good approximation ratios~\cite{6600983}. In our work, due to the parameter sharing across AI models, the shared parameters only need to be cached once in a server, resulting in the submodularity of storage constraints. For this reason, when adapting the aforementioned content placement strategies, e.g., greedy algorithms~\cite{6600983}, to our case, no theoretical constant approximation guarantee can be achieved.
\par


In the field of edge intelligence, caching AI models in distributed wireless edge networks can facilitate model delivery for both inference and training \cite{xu2020survey,lin2023pushing,xu2024cached,tang2023energy,lin2024split}. End users can download the required AI models from edge servers, thereby significantly reducing service latency~\cite{9522156, 10183793}. Analogous to traditional edge caching, these studies focus on optimizing model placement decisions to enhance the QoE of users subject to the storage limitations of edge servers. However, these papers have not considered parameter sharing either. In \cite{wu2024efficient}, Wu et al. propose a multi-user model downloading method by exploiting the shared parameters in AI models, but their scheme focuses on designing a multicasting scheme rather than model placement. To our best knowledge, our work is the first to define the parameter-sharing model placement problem, identify its mathematical properties, and develop the corresponding solution approaches.
\par 

\section{Parameter-sharing Model Caching Framework}
\subsection{Network Scenario}
We consider a typical scenario in wireless edge networks, as shown in Fig. \ref{fig_system_model}. Let the set $\mathcal{M}=\left\{1,2,\dots,m\dots,M\right\}$ denote the $M$ wireless edge servers (e.g., base stations). All edge servers are interconnected. There are a set ${\mathcal{K}} = \left\{1,2,\dots,k,\dots,K\right\}$ of users covered by these edge servers. The network operator provides a library of AI models ${\mathcal{I}} = \left\{1,2,\dots,i,\dots,I\right\}$, which can be downloaded by users for inference services. The request probability and the quality of service (QoS) constraint on end-to-end (E2E) latency of user $k$ for model $i$ are ${p}_{k,i}$ and $\bar{T}_{k,i}$, respectively. \par 

To ensure timely model downloading, the cloud center will push a set of models in the library to wireless edge servers in the offline stage. Consider user $k$ requests model $i$. 
Two cases of model downloading from edge networks are summarized below, where $\mathcal{M}_k$ denotes the edge servers covering user $k$.   \par 
\begin{itemize}
	\item \textbf{Downloading from associated edge servers}: User $k$ first sends the model request to its associated edge servers. If user $k$ can directly download model $i$ from any edge server $m$ in $\mathcal{M}_k$ storing this model and finish the on-device inference within $\bar{T}_{k,i}$, a cache hit occurs.
 
	\item\textbf{Downloading from non-associated edge servers}: If no edge server in $\mathcal{M}_k$ caches model $i$, the remaining edge servers will provide the model to the user. Specifically, model $i$ will be transferred from the edge server caching the model to an edge server in $\mathcal{M}_k$ for downloading to the user with the lowest latency. A cache hit occurs if the E2E latency (including edge-to-edge, edge-to-user, and on-device inference latency) meets the latency requirement $\bar{T}_{k,i}$. 
\end{itemize}\par 
During the model caching decision stage, we use the expected downloading data rate $\bar{C}_{m,k}$ from edge server $m$ to its associated user $k$ to  determine the model placements, where 
\begin{gather}\label{eq_communication}
\bar{C}_{m,k} = \bar{B}_{m,k}{\rm{log}}_2\left(1+\frac{\bar{P}_{m,k} \gamma_0  d_{m,k}^{-\alpha_0}}{n_0\bar{B}_{m,k}}\right), 
\end{gather}
where $\gamma_0$ is the antenna-related factor, $\alpha_0$ is the path loss factor, $d_{m,k}$ is the distance between user $k$ and edge server $m$, $\bar{P}_{m,k}$ and $\bar{B}_{m,k}$ are the expected transmit power and spectrum bandwidth of edge server $m$ allocated to user $k$, respectively, and $n_0$ is the power spectral density of the additional Gaussian white noise. Besides, we assume the transmission data rate between edge server $m$ and $m'$ is a constant value $C_{m,m'}$.


\subsection{Parameter-sharing Model Library}
We consider parameter sharing in the model library $\mathcal{I}$. There are $J$ parameter blocks in $\mathcal{I}$ in total. A parameter block refers to a set of parameters, which reduces problem complexity given that AI models can contain billions of parameters. A parameter block can refer to a layer in a CNN, a block in a transformer, a set of low-dimensional trainable parameters in PEFT (e.g., the LoRA technique), and even an entire backbone network, and so on, depending on how parameters are shared and non-shared. We use set $\mathcal{J}$ to represent all parameter blocks and set $\mathcal{I}_j$ to denote models containing parameter block $j$. It is worth noting that a parameter block can be exclusive to one model or shared by multiple models. For ease of presentation and schematic design, a parameter block shared by more than one model in the library is referred to as a shared parameter block; otherwise, it is called a specific parameter block.\par 

\subsection{Design Objective}
The TrimCaching framework aims to judiciously place AI models to serve as many user requests as possible by exploiting shared parameter blocks among models. For cache misses on edge servers, the downloading request can be forwarded to the cloud center for fetching the model. However, since downloading models from the cloud can be much slower, our policies aim to maximize the cache hit ratio, defined as the probability of successfully downloading the AI models from edge servers within latency constraints, which is also a common design objective in edge caching~\cite{8374917}. \par
\begin{figure}[t]
	\centerline{\includegraphics[width=0.4\textwidth]{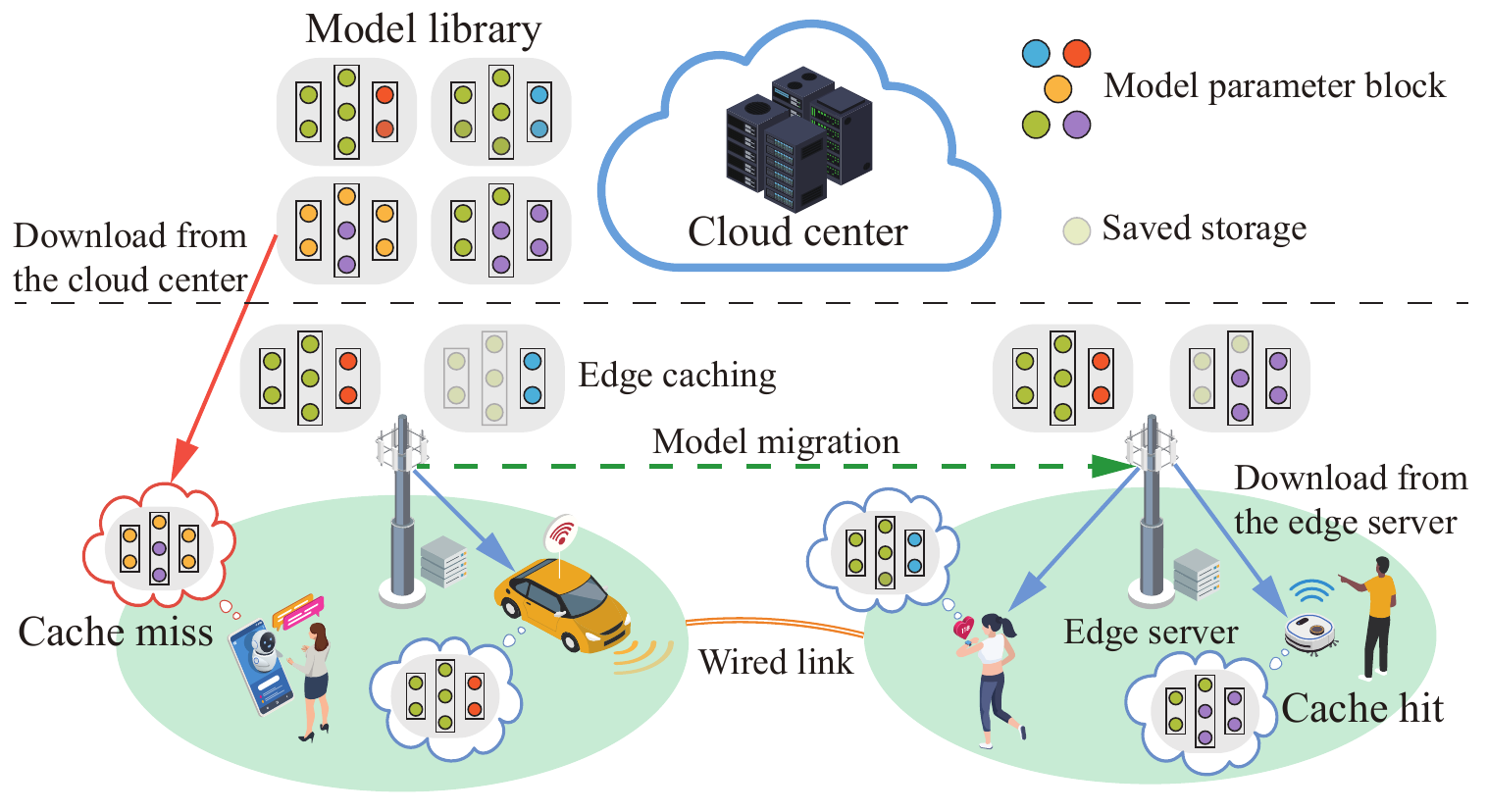}}
	\caption{The TrimCaching framework in multi-edge scenario.}
	\label{fig_system_model}
\end{figure}

\section{Cache Hit Ratio Maximization Problem}
\subsection{Problem Formulation}
The TrimCaching framework aims to maximize the expected cache hit ratio for users' model requests by addressing the model placement problem under the storage capacity of edge servers and service latency requirements. The expected cache hit ratio is 
\begin{equation}\label{original_objective}
U\left({\bf{X}}\right)=\frac{\sum\limits_{k\in\mathcal{K}}\sum\limits_{i\in\mathcal{I}}{p}_{k,i}\left[1-\prod\limits_{m\in\mathcal{M}}\left(1-x_{m,i}{\mathbb{I}}_{1}\left(m,k,i\right)\right)\right]}{\sum\limits_{k\in\mathcal{K}}\sum\limits_{i\in\mathcal{I}}{p}_{k,i}},
\end{equation}
where ${p}_{k,i}$ is the request probability of user $k$ for model $i$. $x_{m,i}$ is the placement of model $i$ on edge server $m$, where $x_{m,i} =1$ indicates that model $i$ is cached on edge server $m$. $\prod\limits_{m\in\mathcal{M}}\left(1-x_{m,i}\mathbb{I}_{1}\left(m,k,i\right)\right) = 0$ implies user $k$ can obtain model $i$ from any edge server within latency requirement, where  
\begin{equation}
\mathbb{I}_{1}\left(m,k,i\right)={\mathbb{I}}_{\left\{T_{m,k,i}\le {\bar{T}}_{k,i}\right\}}
\end{equation} 
is an indicator function, and ${\mathbb{I}}_{1}\left(m,k,i\right) = 1$ if and only if $T_{m,k,i}\le {\bar{T}}_{k,i}$. $T_{m,k,i}$ is the E2E latency when edge server $m$ delivers model $i$ to user $k$, including model downloading latency and on-device inference latency. If user $k$ can download model $i$ from its associated edge server, we have
\begin{equation}
T_{m,k,i}=\frac{D_i}{\bar{C}_{m,k}} + t_{k,i},
\end{equation}
where $D_i$ is the size of model $i$ and $t_{k,i}$ is the inference latency of user $k$ with model $i$. If edge server $m$, which is unassociated with user $k$, supplies model $i$ for user $k$, we have
\begin{equation}
T_{m,k,i}=\mathop{\min}\limits_{m'\in\mathcal{M}_k}\left(\frac{D_i}{C_{m,m'}}+\frac{D_i}{\bar{C}_{m',k}}\right) +t_{k,i}.
\end{equation}

The problem formulation is given as follows. 
\begin{subequations}
	\begin{equation}
		{\mathcal{P}1.1}:\ \mathop{\max}\limits_{{\bf{X}}}\ U\left({\bf{X}}\right)
	\end{equation}	
	\begin{equation}\label{general_problem_storage}
		{\rm{s.t.}} \sum\limits_{j\in\mathcal{J}} D'_j\left[1-\prod\limits_{i\in\mathcal{I}_j}\left(1-x_{m,i}\right) \right]\le Q_m,\ \forall m\in{\mathcal{M}},
	\end{equation}	
	\begin{equation}\label{general_problem_binary}
		x_{m,i}\in\left\{0,1\right\},\ \forall m\in{\mathcal{M}},\forall i\in{\mathcal{I}},
	\end{equation}	
\end{subequations}
where $D'_j$ is the size of parameter block $j$, and the storage capacity of edge server $m$ is denoted by $Q_m$. $1-\prod\limits_{i\in\mathcal{I}_j}\left(1-x_{m,i}\right) = 1$ means that parameter block $j$, if shared by multiple models on edge server $m$, is stored only once, which leads to improved storage efficiency.

Note that our formulation ignores user mobility because the problem is solved based on a ``snapshot'' of user locations. This is commonly adopted in model placement schemes~\cite{6600983}. In practice, our algorithm can conduct model placement decisions by solving the above problem and then re-initiate model placement when the performance degrades to a certain threshold. Our simulation results will show that our algorithm is resilient to user mobility over time, thus eliminating the need for frequent model replacement that would consume backbone bandwidth.

\subsection{Problem Mapping}
Solving $\mathcal{P}1.1$ is extremely challenging due to the product of integer decision variables arising from parameter block sharing. In this subsection, we map the problem to a known NP-hard problem and show there is no polynomial-time algorithm to solve it with a constant approximation ratio. \par 
\begin{proposition}\label{proposition_submodular}
	${\mathcal{P}1.1}$ is a submodular maximization problem with $M$ submodular constraints.
\end{proposition}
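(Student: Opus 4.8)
The plan is to recast $\mathcal{P}1.1$ as the optimization of a set function over the ground set $\Omega=\mathcal{M}\times\mathcal{I}$, where including a pair $(m,i)$ in a subset $S\subseteq\Omega$ corresponds to setting $x_{m,i}=1$. Under this identification both the objective and each storage requirement become set functions, and I would show that every one of them is monotone and submodular. Since the denominator $\sum_{k}\sum_{i}p_{k,i}$ of \eqref{original_objective} is a positive constant independent of $\mathbf{X}$, and positive scaling preserves submodularity, it suffices to establish the property for the numerator of $U$.

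First I would treat the objective. For a fixed user--model pair $(k,i)$, define the ``hit'' set $A_{k,i}=\{(m,i):\mathbb{I}_1(m,k,i)=1\}\subseteq\Omega$. Then the bracketed term $1-\prod_{m\in\mathcal{M}}\left(1-x_{m,i}\mathbb{I}_1(m,k,i)\right)$ equals $1$ exactly when $S\cap A_{k,i}\neq\emptyset$, so it is precisely the indicator that request $(k,i)$ is served within latency. This is the canonical coverage function, which is monotone and submodular: adding $(m,i)$ to $S$ produces marginal gain $1$ only if $(m,i)\in A_{k,i}$ and $S$ does not already meet $A_{k,i}$, and this gain is nonincreasing as $S$ grows, which is exactly the diminishing-returns property. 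Because the weights $p_{k,i}\ge 0$, and both monotonicity and submodularity are closed under nonnegative linear combinations, the numerator of $U$---hence $U$ itself---is a monotone submodular set function.

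Next I would handle the $M$ constraints in the same spirit. For a fixed server $m$ and parameter block $j$, let $B_{m,j}=\{(m,i):i\in\mathcal{I}_j\}\subseteq\Omega$. The term $1-\prod_{i\in\mathcal{I}_j}\left(1-x_{m,i}\right)$ in \eqref{general_problem_storage} equals $1$ exactly when $S\cap B_{m,j}\neq\emptyset$, i.e.\ when at least one model containing block $j$ is cached on server $m$; this is again a coverage indicator and therefore monotone submodular by the identical argument. Summing over $j$ with the nonnegative block sizes $D'_j$ shows that the left-hand side of \eqref{general_problem_storage}, viewed as a set function $g_m(S)$, is monotone submodular for every $m$. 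Hence $\mathcal{P}1.1$ maximizes a monotone submodular objective subject to the $M$ submodular covering constraints $g_m(S)\le Q_m$, which is the claim.

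The step I expect to be the main obstacle is conceptual rather than computational: seeing that the products over $m$ (in the objective) and over $i\in\mathcal{I}_j$ (in each constraint) are not obstructions to submodularity but are in effect logical-OR operations, so that every $1-\prod(1-\cdot)$ expression collapses to a coverage function once the correct coverage set ($A_{k,i}$ or $B_{m,j}$) is fixed. After that reframing the proof reduces to the standard fact that weighted coverage functions are monotone submodular and that this class is closed under nonnegative linear combinations. The one detail I would state carefully is that each product involves only variables within a single server slice (the same $m$), so the coverage sets are well defined and the constraint functions are separable across servers, giving genuinely $M$ independent submodular constraints.
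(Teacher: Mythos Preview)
Your proposal is correct and, at the structural level, matches the paper's proof: both verify that the objective and each of the $M$ storage functions satisfy the diminishing-returns inequality over the ground set of (server, model) pairs. The paper argues this directly---for the objective it reasons that after enlarging $\mathbf{X}$ to $\mathbf{X}'$ more requests are already served, so adding a new $x_{m,i}$ helps less; for each $g_m$ it observes that parameter blocks of a newly added model may already be present under the larger placement, so the incremental storage is no larger.

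Your route differs in presentation rather than in substance: you recognize each $1-\prod(1-\cdot)$ term as a coverage indicator $\mathbb{I}[S\cap A\neq\emptyset]$ for an appropriately chosen set ($A_{k,i}$ for the objective, $B_{m,j}$ for the constraints), and then invoke the standard facts that coverage indicators are monotone submodular and that this property is preserved under nonnegative linear combinations. This buys you a cleaner, more modular argument than the paper's somewhat informal marginal-gain verification, and it makes explicit \emph{why} the products pose no difficulty. The paper's version, in turn, stays closer to the problem semantics (models served, storage occupied) without appealing to the coverage-function abstraction. Either way, the content is the same; your write-up would be a tidy replacement for the paper's proof.
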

\begin{proof}
We first introduce the definition of submodularity and supermodularity before presenting the proof. A set function $f: 2^{X}\rightarrow\mathbb{R}$ is called submodular \cite{fujishige2005submodular} if $f\left(S\right)+f\left(T\right)\ge f\left(S\cup T\right) +f\left(S\cap T\right)$ holds for all subsets $S,T\subseteq X$, where $X$ is a finite set. Additionally, $f$ is submodular if and only if $f\left(S\cup\left\{x\right\}\right)-f\left(S\right)\ge f\left(T\cup\left\{x\right\}\right)-f\left(T\right)$ for all $S\subseteq T$ and $x\in X\setminus T$. The function $f\left(x\right)$ is called supermodular \cite{lovasz1983submodular} if the reversed inequalities hold for every pair of subsets. \par 
	On the one hand, the objective function of $\mathcal{P}1.1$ is a submodular function. For any subset ${\bf{X}}$ of the universal set, we use $\mathcal{I}_{{\bf{X}}}=\left\{i\mid x_{m,i}\in{\bf{X}}\right\}$ and $\mathcal{M}_{{\bf{X}}}=\left\{m\mid x_{m,i}\in{\bf{X}}\right\}$ to represent the placed model set and the edge server set for model placement decision ${\bf{X}}$. Suppose that ${\bf{X}}\subseteq{\bf{X}}'\subseteq{\bf{V}}$ in $\mathcal{P}1.1$, where ${\bf{V}}$ is the universal set that contains all possible model placement results. For any $x_{m,i}\in{\bf{V}}\setminus{\bf{X}}'$, the increased cache hit ratio when we add $x_{m,i}$ into ${\bf{X}}'$ and ${\bf{X}}$ are $U\left({\bf{X}}'\cup\left\{x_{m,i}\right\}\right) - U\left({\bf{X}}'\right)$ and $U\left({\bf{X}}\cup\left\{x_{m,i}\right\}\right) - U\left({\bf{X}}\right)$, respectively. $U\left({\bf{X}}\cup\left\{x_{m,i}\right\}\right) - U\left({\bf{X}}\right)$ is at least larger than $U\left({\bf{X}}'\cup\left\{x_{m,i}\right\}\right) - U\left({\bf{X}}'\right)$. It is because ${\bf{X}}\subseteq{\bf{X}}'$, $\mathcal{M}_{{\bf{X}}}\subseteq\mathcal{M}_{{\bf{X}}'}$, and $\mathcal{I}_{{\bf{X}}}\subseteq\mathcal{I}_{{\bf{X}}'}$. $\mathcal{M}_{{\bf{X}}'}$ can provide at least as many models as $\mathcal{M}_{{\bf{X}}}$ to users. Users can download model $i$ from $\mathcal{I}_{{\bf{X}}'}\setminus\mathcal{I}_{{\bf{X}}}$ provided by $\mathcal{M}_{{\bf{X}}'}$. Therefore, we can conclude that the objective function of $\mathcal{P}1.1$ is a submodular function. \par 
	On the other hand, the constraint function is a set of submodular functions. Suppose that ${\bf{X}}^1_{m}\subseteq{\bf{X}}^2_{m}\subseteq{\bf{V}}_{m}$, and  
    \begin{equation}
    g_m\left({\bf{X}}\right)= \sum\limits_{j\in\mathcal{J}} D'_j\left[1-\prod\limits_{i\in\mathcal{I}_j}\left(1-x_{m,i}\right) \right], 
    \end{equation}
    where ${\bf{V}}_{m}$ is the universal set for edge server $m$. For any element $x_{m,i}\in{\bf{V}}_{m}\setminus{\bf{X}}^2_{m}$, the incremental occupied storage is $g_m\left({\bf{X}}^1_{m}\cup\left\{x_{m,i}\right\}\right)-g_m\left({\bf{X}}^1_{m}\right)$ when we add $x_{m,i}$ into ${\bf{X}}^1_{m}$. Since ${\bf{X}}^1_{m}\subseteq{\bf{X}}^2_{m}$, it is likely that the parameter blocks in model $i$ have already been included in models corresponding to ${\bf{X}}^2_{m}$. Therefore, $g_m\left({\bf{X}}^1_{m}\cup\left\{x_{m,i}\right\}\right)-g_m\left({\bf{X}}^1_{m}\right)\ge g_m\left({\bf{X}}^2_{m}\cup\left\{x_{m,i}\right\}\right)-g_m\left({\bf{X}}^2_{m}\right)$, implying that the constraint function of $\mathcal{P}1.1$ is a set of submodular functions.
\end{proof}
\begin{proposition}\label{NPhard}
	${\mathcal{P}1.1}$ is an NP-hard problem. Moreover, no polynomial-time algorithm that solves ${\mathcal{P}1.1}$ with a constant-approximation guarantee exists.
\end{proposition}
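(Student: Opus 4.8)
The plan is to establish both claims simultaneously through a single polynomial-time, value-preserving reduction from a known hard problem into a restricted instance of $\mathcal{P}1.1$. The restriction I would use is the single-server case $M=1$: with only one edge server the product over $m$ in $U({\bf{X}})$ collapses, so the objective degenerates to a modular (weighted-sum) function of the placement variables $x_{1,i}$, while the storage constraint $g_1({\bf{X}})\le Q_1$ remains the submodular coverage function identified in Proposition~\ref{proposition_submodular}. This isolates all of the difficulty in the submodular constraint induced by shared parameter blocks---precisely the source of hardness flagged earlier---and it lets me sidestep any reasoning about cross-server latency routing.

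The target I would reduce from is Densest-$k$-Subgraph (DkS): given a graph $G=(V,E)$ and an integer $b$, find $b$ vertices maximizing the number of induced edges. The reduction maps each vertex $v\in V$ to a parameter block of unit size, $D'_j=1$, and each edge $e=\{u,v\}\in E$ to a model whose parameter-block set is $\{u,v\}$; it introduces a single user who requests every model with equal probability and sets the latency parameters so that every model meets its deadline (the relevant indicator ${\mathbb{I}}_1$ is identically one). With the storage budget $Q_1=b$, the constraint $g_1({\bf{X}})\le b$ states that the selected models use at most $b$ distinct blocks, i.e., the chosen edges touch at most $b$ vertices, and the objective counts the number of selected models, i.e., edges.

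The key steps are then: (i) check that the construction is polynomial in $|V|+|E|$; (ii) prove the two optima coincide---any feasible placement corresponds to a set of edges spanning at most $b$ vertices, which can be extended to an exactly $b$-vertex subset without losing edges, while conversely the induced edges of any $b$-vertex subset form a feasible placement---so the optimal values are equal and the correspondence is exactly value-preserving; and (iii) deduce NP-hardness of $\mathcal{P}1.1$ from NP-hardness of DkS. Because objective values are preserved exactly, a polynomial-time $\rho$-approximation for $\mathcal{P}1.1$ would immediately yield a $\rho$-approximation for DkS; invoking the known inapproximability of DkS (admitting no constant-factor polynomial-time approximation under standard complexity assumptions) then rules out any constant-factor algorithm for $\mathcal{P}1.1$.

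The step I expect to be the main obstacle is the last one: constant-factor inapproximability of DkS is not implied unconditionally by $\mathrm{P}\ne\mathrm{NP}$ but rests on stronger hypotheses, so I would state the precise assumption (or cite the exact DkS hardness result being used) rather than assert an unconditional impossibility. A secondary point requiring care is confirming that the single-server, modular-objective restriction is a genuine instance of $\mathcal{P}1.1$---in particular that the request probabilities and latency indicators can be chosen so the weighted cache-hit count reduces to a plain edge count---so that hardness of this special case legitimately transfers to the general problem.
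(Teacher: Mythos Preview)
Your proposal is correct and follows a genuinely different route from the paper. The paper proceeds by changing variables from model indicators $x_{m,i}$ to block indicators $y_{m,j}$, obtaining an equivalent problem $\mathcal{P}1.2$ whose constraints become ordinary knapsacks while the objective becomes supermodular; it then invokes a known result that supermodular maximization under a single knapsack constraint is NP-hard and admits no constant-factor approximation. In other words, the paper pushes the combinatorial difficulty into the \emph{objective} and cites a black-box hardness result for that class. You do the opposite: by taking $M=1$ you make the objective modular and keep the difficulty in the \emph{submodular constraint}, then give an explicit, value-preserving reduction from Densest-$k$-Subgraph. Your approach is more self-contained (no external structural theorem is needed beyond the hardness status of DkS) and makes the source of hardness---parameter sharing forcing a coverage-type budget---completely transparent. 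The price is the one you already identified: constant-factor inapproximability of DkS is known only under assumptions stronger than $\mathrm{P}\neq\mathrm{NP}$ (e.g., Gap-ETH or related hypotheses), so your second claim is conditional in a way you must state explicitly; the paper's proof inherits whatever assumptions underlie the cited knapsack result, so it is not unconditional either, but the dependency is hidden behind the citation. Your verification that the $M=1$ instance with uniform requests, unit block sizes, and slack deadlines is a bona fide instance of $\mathcal{P}1.1$ is straightforward and should cause no trouble.
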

\begin{proof}
We first introduce two concepts. First, $\sum\limits_{x\in X}g\left(x\right)\le b$ is called a knapsack constraint, where $g\left(x\right)$ is a cost function and $X$ is a finite set. Second, if the output of an algorithm for a maximization problem is guaranteed to be at least $\alpha$-times of the optimal solution, where $\alpha\in\left[0,1\right]$ is a constant, we can say this algorithm ensures a constant approximation guarantee. \par 
	$\mathcal{P}1.1$ is equal to the following problem $\mathcal{P}1.2$, where $y_{m,j}$ represents the caching status of parameter block $j$ on edge server $m$. $y_{m,j}=1$ indicates that edge server $m$ stores parameter block $j$. The relationship between $x_{m,i}$ and $y_{m,j}$ are: $x_{m,i}=\prod\limits_{j\in\mathcal{J}_i}y_{m,j}$ and $y_{m,j} = 1-\prod\limits_{i\in\mathcal{I}_j}\left(1-x_{m,i}\right)$. With this relationship, $U\left({\bf{X}}\right)$ is transformed into $U\left(\bf{Y}\right)=\frac{\sum\limits_{k\in\mathcal{K}}\sum\limits_{i\in\mathcal{I}}p_{k,i}\left[1-\prod\limits_{m\in\mathcal{M}}\left(1-\prod\limits_{j\in\mathcal{J}_i}y_{m,j}{\mathbb{I}}_{1}\left(m,k,i\right)\right)\right]}{\sum\limits_{k\in\mathcal{K}}\sum\limits_{i\in\mathcal{I}}{p}_{k,i}}$.
	\begin{subequations}
		\begin{equation}
			{\mathcal{P}1.2}:\ \mathop{\max}\limits_{\bf{Y}}\ U\left(\bf{Y}\right)
		\end{equation}	
		\begin{equation} 
			{\rm{s.t.}} \ \sum\limits_{j\in\mathcal{J}} D'_jy_{m,j}\le Q_m,\ \forall m\in{\mathcal{M}},
		\end{equation}	
		\begin{equation} 
			y_{m,j}\in\left\{0,1\right\},\ \forall m\in{\mathcal{M}},\forall j\in{\mathcal{J}}.
		\end{equation}	
	\end{subequations}\par
	$\mathcal{P}1.2$ is a supermodular maximization problem with $M$ knapsack constraints. The proof is omitted here since it is similar to the proof of Proposition \ref{proposition_submodular}. It has been shown that the supermodular maximization problem with $M$ knapsack constraints is an NP-hard problem, and there is no approximation algorithm with a constant-approximation guarantee to solve this problem when $M=1$ \cite{KELLERER201764}. Therefore, there is no polynomial-time algorithm with a constant-approximation guarantee for $\mathcal{P}1.2$, which also holds for $\mathcal{P}1.1$. 
\end{proof}
Although the problem cannot be solved approximately in general, in the following sections, we will first introduce a special case of $\mathcal{P}1.1$, which represents the typical parameter sharing scenario in practice, for which a polynomial-time algorithm can be developed to obtain a solution with $(1-\epsilon)/2$ approximation guarantee. Subsequently, we will propose a greedy algorithm to solve the original problem $\mathcal{P}1.1$ for the general case. Although there is no constant approximation guarantee, the solution approach is still highly effective.
\section{Special Case with A Small Fixed Number of Shared Parameter Blocks}
We first consider a special case of $\mathcal{P}1.1$ where there is a small fixed number of shared parameter blocks. Formally speaking, the number of shared blocks is independent of the size of the model library and hence the problem scale, making it feasible to traverse all the shared parameter blocks. An example is illustrated in Fig. \ref{fig_tree_model}, where all shared parameter blocks originated from 2 pre-trained models. Such cases are often true in practice because downstream AI models can be derived from a small number of pre-trained models (e.g., Open-source models pre-trained on ImageNet or foundation models, such as GPT-3). Our key idea is to develop an approximate algorithm by traversing the combinations of all shared parameter blocks while judiciously selecting specific parameter blocks, resulting in a polynomial-time algorithm. To achieve this goal, we will propose a successive greedy method to decompose $\mathcal{P}1.1$ into sub-problems, and then solve them with a proposed DP-based algorithm sequentially. The resultant algorithm has $\frac{1-\epsilon}{2}$ approximation guarantee.  
\begin{figure}[t]
	\centerline{\includegraphics[width=0.4\textwidth]{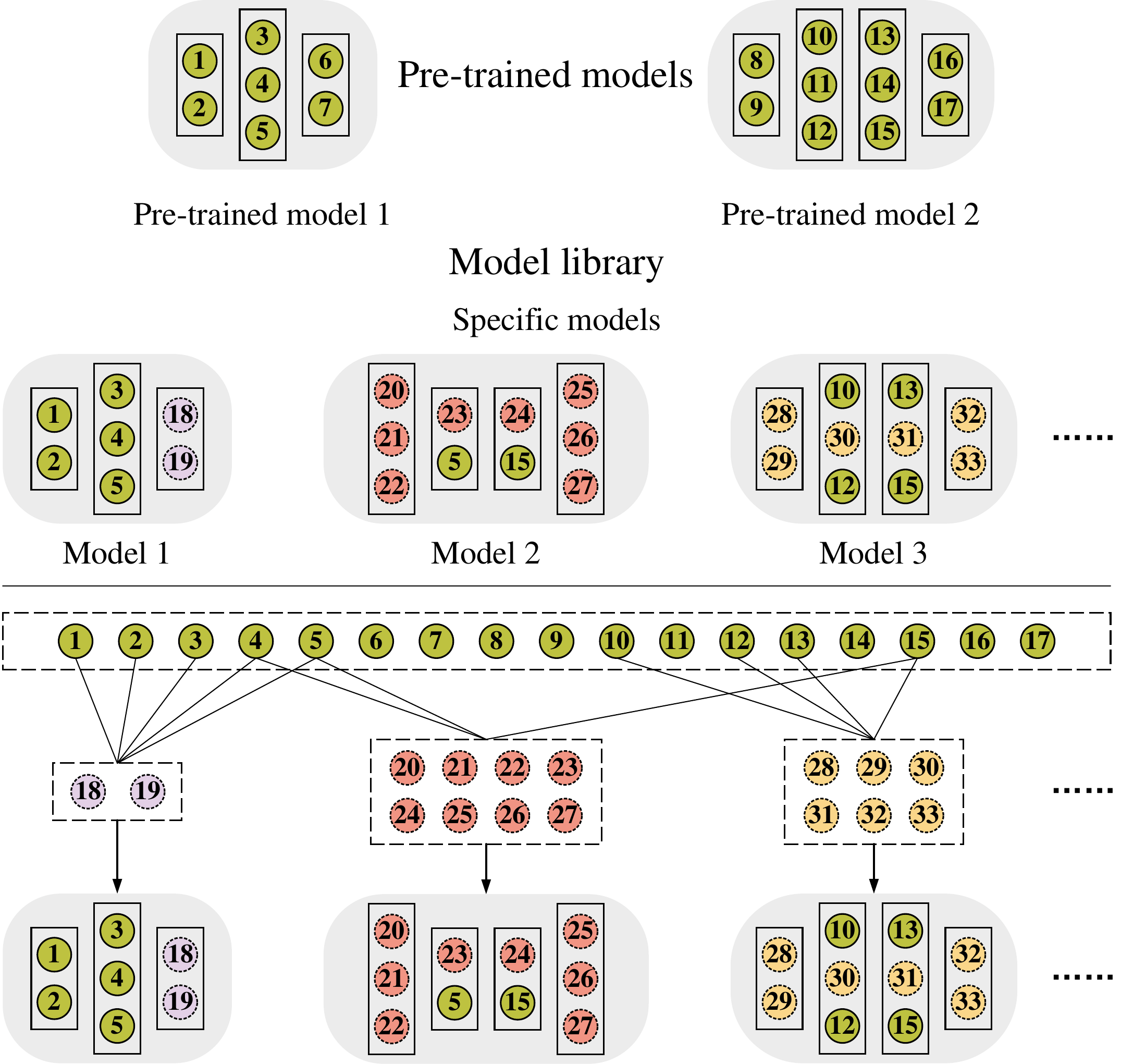}}
	\caption{An example of the special case with a small fixed number of shared parameter blocks. In the figure, regardless of the scale of the model library, the shared green parameter blocks come from two pre-trained models. Nodes in other colors represent specific parameter blocks in the library.
 }
	\label{fig_tree_model}
\end{figure}

\subsection{Successive Greedy Algorithm and the Constant-approximation Guarantee}
We propose a successive greedy algorithm to decompose $\mathcal{P}1.1$ into $M$ sub-problems, each corresponding to edge server $m$, and solve each sub-problem sequentially. The $m$-th sub-problem is
\begin{subequations}
	\begin{equation}
		{\mathcal{P}2.1}_m:\ \mathop{\max}\limits_{\hat{{\bf{X}}}_m}\ \hat{U}_m\left(\hat{{\bf{X}}}_m\right)
	\end{equation}	
	\begin{equation}
		{\rm{s.t.}} \ \sum\limits_{j\in\mathcal{J}} D'_j\left[1-\prod\limits_{i\in\mathcal{I}_j}\left(1-\hat{x}_{m,i}\right) \right]\le Q_m,
	\end{equation}	
	\begin{equation} 
		\hat{x}_{m,i}\in\left\{0,1\right\},\ \forall i\in\mathcal{I},
	\end{equation}	
\end{subequations}
where 
\begin{equation}\label{eq_uhat}
	\begin{aligned}
	\hat{U}_m\left(\hat{{\bf{X}}}_m\right)
	=\frac{\sum\limits_{k\in\mathcal{K}}\sum\limits_{i\in\mathcal{I}}{p}_{k,i}\hat{x}_{m,i}\mathbb{I}_{1}\left(m,k,i\right)\mathbb{I}_{2}\left(m,k,i\right)}{\sum\limits_{k\in\mathcal{K}}\sum\limits_{i\in\mathcal{I}}{p}_{k,i}},
	\end{aligned}
\end{equation}
and 
\begin{equation}
\mathbb{I}_{2}\left(m,k,i\right)=\mathbb{I}_{\left\{\prod\limits_{m'=1}^{m-1}\left(1-\hat{x}_{m',i} \mathbb{I}_{\left\{T_{m',k,i}\le\bar{T}_{k,i}\right\}}\right)=1\right\}}
\end{equation}
is an indicator function, where $\mathbb{I}_{2}\left(m,k,i\right)=1$ represents that the model request for model $i$ of user $k$ cannot be satisfied by any of the first $m-1$ edge servers. Moreover, we assume that $\mathbb{I}_{2}\left(m,k,i\right) = 1$ when $m=1$.

${\mathcal{P}2.1}_m$ aims to design caching decision $\hat{{\bf{X}}}_{m}$ for edge server $m$. $\hat{{\bf{X}}}=\bigcup\limits_{m\in\mathcal{M}}\hat{{\bf{X}}}_m$ is denoted by the solution to ${\mathcal{P}1.1}$ in the special case. Thus, the cache hit ratio of ${\mathcal{P}1.1}$ follows from
\begin{equation}
U\left(\hat{{\bf{X}}}\right) = U\left(\bigcup\limits_{m\in\mathcal{M}}\hat{{\bf{X}}}_m\right) = \sum\limits_{m\in\mathcal{M}}\hat{U}_m\left(\hat{{\bf{X}}}_m\right),
\end{equation}
where the second equality holds because the cache hit ratio for model requests that can be served by multiple edge servers is only counted once, a condition enforced by $\mathbb{I}_{2}\left(m,k,i\right)$. The successive greedy algorithm solves ${\mathcal{P}2.1}_m$ sequentially in the order of $m$, and therefore the placement results of the first $m-1$ edge servers are known to edge server $m$, i.e., $\hat{x}_{m',i}$ is known in  $\mathcal{P}2.1_m$. Edge server $m$ can select and cache a subset of models from model library $\mathcal{I}$ after the first $m-1$ edge servers make caching decisions. The proposed TrimCaching Spec algorithm is summarized in Algorithm \ref{algorithm_successive_greedy}.  \par 
\begin{algorithm}[!ht]
	\caption{TrimCaching Spec} 
	\label{algorithm_successive_greedy}
	\LinesNumbered
	\KwIn{$\mathcal{I}$, $\mathcal{K}$, and $\mathcal{M}$.}
	\KwOut{$\hat{{\bf{X}}}=\bigcup\limits_{m\in\mathcal{M}}\hat{{\bf{X}}}_m$.} 
	{\bf Initialize:} $\hat{{\bf{X}}}_m=\emptyset, \ \forall m\in\mathcal{M}$ and $\mathbb{I}_{2}\left(m,k,i\right)=1$.\\
	\For{$m\in\mathcal{M}$}
	{
		Solve $\mathcal{P}2.1_m$ with Algorithm \ref{algorithm_DP} to obtain $\hat{U}_m$ for edge server $m$, and the corresponding caching decision $\hat{{\bf{X}}}_m$.\\
		Update $\mathbb{I}_{2}\left(m,k,i\right)$.\\
	}
\end{algorithm}\par 
Additionally, we have the following proposition on the constant-approximation guarantee for the TrimCaching Spec algorithm.
\begin{proposition}\label{Successive_greedy_optimal}
	Suppose that each sub-problem can be solved optimally. In the considered special case, the TrimCaching Spec algorithm can obtain a solution $\hat{{\bf{X}}}$ for $\mathcal{P}1.1$ which is lower bounded by $U\left(\hat{{\bf{X}}}\right)\ge\frac{1}{2}U\left({\bf{X}}^*\right)$, where ${\bf{X}}^*$ is the optimal solution to $\mathcal{P}1.1$.
\end{proposition}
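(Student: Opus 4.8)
The plan is to establish the bound by sandwiching the algorithm's value $U(\hat{{\bf{X}}})$ between two inequalities, exploiting the additive decomposition $U(\hat{{\bf{X}}})=\sum_{m\in\mathcal{M}}\hat{U}_m(\hat{{\bf{X}}}_m)$ already derived above. Writing ${\bf{X}}^{*}_m$ for the restriction of the global optimum ${\bf{X}}^{*}$ to edge server $m$, I would first compare the per-server optimal value $\hat{U}_m(\hat{{\bf{X}}}_m)$ against $\hat{U}_m({\bf{X}}^{*}_m)$ — the value the optimum's $m$-th block attains when inserted into the algorithm's sequential run — and then show that these quantities, summed over $m$, already account for essentially all requests served by ${\bf{X}}^{*}$. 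Combining the two directions yields $2U(\hat{{\bf{X}}})\ge U({\bf{X}}^{*})$.

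First I would verify feasibility and invoke per-server optimality. The storage constraint of subproblem $\mathcal{P}2.1_m$ is identical to the $m$-th constraint of $\mathcal{P}1.1$, so ${\bf{X}}^{*}_m$ is a feasible point of $\mathcal{P}2.1_m$; since by hypothesis $\hat{{\bf{X}}}_m$ solves $\mathcal{P}2.1_m$ optimally, we obtain
\begin{equation}
\hat{U}_m(\hat{{\bf{X}}}_m)\ge \hat{U}_m({\bf{X}}^{*}_m),\quad \forall m\in\mathcal{M}.
\end{equation}
It is crucial here that $\hat{U}_m({\bf{X}}^{*}_m)$ is evaluated with the indicator $\mathbb{I}_{2}(m,k,i)$ determined by the algorithm's own earlier decisions $\hat{{\bf{X}}}_1,\dots,\hat{{\bf{X}}}_{m-1}$, and not by ${\bf{X}}^{*}$; this is precisely the sense in which ${\bf{X}}^{*}_m$ is a legitimate competitor inside the sequential procedure.

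The crux is then a covering argument bounding $\sum_{m}\hat{U}_m({\bf{X}}^{*}_m)$ from below. I would let $A$ denote the set of request pairs $(k,i)$ served within latency by $\hat{{\bf{X}}}$ and $B$ the pairs served by ${\bf{X}}^{*}$, and write $w(S)=\sum_{(k,i)\in S}p_{k,i}\big/\sum_{k\in\mathcal{K}}\sum_{i\in\mathcal{I}}p_{k,i}$ for the normalized weight of a set $S$, so that $U(\hat{{\bf{X}}})=w(A)$ and $U({\bf{X}}^{*})=w(B)$. Fix any $(k,i)\in B\setminus A$ and let $m$ be an edge server that serves it in ${\bf{X}}^{*}$, so $x^{*}_{m,i}=1$ and $\mathbb{I}_{1}(m,k,i)=1$. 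Because $(k,i)\notin A$, no edge server in the algorithm's run serves it, in particular none of the first $m-1$, whence $\mathbb{I}_{2}(m,k,i)=1$; therefore $(k,i)$ contributes its full weight $p_{k,i}$ to $\hat{U}_m({\bf{X}}^{*}_m)$. Summing these contributions over the distinct pairs in $B\setminus A$ (each contributes to at least its own server's term) gives $\sum_{m}\hat{U}_m({\bf{X}}^{*}_m)\ge w(B\setminus A)\ge w(B)-w(A)=U({\bf{X}}^{*})-U(\hat{{\bf{X}}})$. Chaining this with the per-server optimality inequality produces $U(\hat{{\bf{X}}})=\sum_m\hat{U}_m(\hat{{\bf{X}}}_m)\ge U({\bf{X}}^{*})-U(\hat{{\bf{X}}})$, which rearranges to the claim.

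The step I expect to be the main obstacle is the covering argument above: it requires careful bookkeeping so that every request served by the optimum but missed by the algorithm is credited to exactly one term $\hat{U}_m({\bf{X}}^{*}_m)$ through the correct server, and that the $\mathbb{I}_{2}$ indicator — encoding ``not yet served by servers $1,\dots,m-1$ in the algorithm's run'' — genuinely evaluates to $1$ there. One subtlety to handle cleanly is that a pair in $B\setminus A$ may be serviceable by several optimal servers; selecting any one of them suffices for the lower bound, and any over-counting only strengthens the inequality, so no double-counting difficulty arises.
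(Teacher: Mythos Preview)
Your argument is correct. The paper itself omits the proof of this proposition (deferring it to the extended version), so there is no in-paper proof to compare against; that said, what you have written is exactly the standard $1/2$-approximation argument for sequential greedy over separable constraints and is what one would expect here. The two ingredients you isolate --- per-server optimality $\hat{U}_m(\hat{{\bf X}}_m)\ge\hat{U}_m({\bf X}^*_m)$ by feasibility of ${\bf X}^*_m$ in $\mathcal{P}2.1_m$, and the covering bound $\sum_m\hat{U}_m({\bf X}^*_m)\ge U({\bf X}^*)-U(\hat{{\bf X}})$ via the observation that any request in $B\setminus A$ has $\mathbb{I}_2(m,k,i)=1$ at its serving server --- combine cleanly to give the factor $1/2$. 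Your handling of the $\mathbb{I}_2$ indicator (evaluated with respect to the algorithm's own prefix $\hat{{\bf X}}_1,\dots,\hat{{\bf X}}_{m-1}$, not the optimum's) is the key point and you state it precisely; the potential over-counting you flag is indeed harmless in the lower-bound direction.
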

\begin{proof}
	The proof is omitted due to page limitations, which can be found in the extended version of this work \cite{qu2024trimcaching}.
\end{proof}
Thus, a condition to achieve the constant-approximation guarantee of the TrimCaching Spec algorithm is obtaining the optimal solution to $\mathcal{P}2.1_m$. 
In the next subsection, we will introduce how to solve $\mathcal{P}2.1_m$ optimally in polynomial time. \par
\subsection{DP-based Approach and the $\epsilon$-optimal Solution}
We propose a DP-based algorithm to solve $\mathcal{P}2.1_m$. The submodularity of the constraint in $\mathcal{P}2.1_m$ results from the shared parameter blocks among models. To address the submodular constraints, our core idea is to make the caching decisions for shared parameter blocks and specific parameter blocks separately. We use $\mathcal{A}$ to denote a set where each element represents a combination of shared parameter blocks. All the possible combinations are included in $\mathcal{A}$. For example, $\{1,2,3,4,5,15\}$ and $\{10,12,13,15\}$ are two elements in $\mathcal{A}$ for Fig. \ref{fig_tree_model}. For a combination $\mathcal{N}$, all models that include $\mathcal{N}$ is denoted by $\mathcal{I}_\mathcal{N}$. 
For example, if $\mathcal{N}$ is $\{1,2,3,4,5,15\}$, $\mathcal{I}_\mathcal{N}$ is the set of model 1 and 2 for Fig. \ref{fig_tree_model}. Additionally, we use $d_{\mathcal{N}}$ to denote the occupied capacity of edge server $m$ for the shared parameter block combination $\mathcal{N}$. In the above example, $d_{\mathcal{N}}$ is the total size of parameter blocks 1, 2, 3, 4, 5, 15. \par 
For ease of calculation, the size of model $i\in\mathcal{I}_{\mathcal{N}}$ that needs to be considered in the further DP process is denoted by 
\begin{equation}\label{eq_size}
D_{\mathcal{N}}\left(i\right)=D_i-d_{\mathcal{N},i},
\end{equation}
where $D_i$ is the size of model $i$, and $d_{\mathcal{N},i}$ is the size of shared parameter blocks in model $i$. For example, for Fig. \ref{fig_tree_model}, if $\mathcal{N}$ is $\{1,2,3,4,5,15\}$, $d_{\mathcal{N},1}$ and $D_{\mathcal{N}}\left(1\right)$ represent the sizes of parameter blocks $\{1,2,3,4,5\}$ and $\{18,19\}$, respectively. \par 
For each $\mathcal{N}$, the proposed DP algorithm determines the maximum cache hit ratio of models in $\mathcal{I}_\mathcal{N}$ within the remaining storage capacity of edge server $m$, i.e. $Q_m-d_{\mathcal{N}}$, and the caching decision is the corresponding set of selected models. Since the denominator in $\hat{U}_m\left(\hat{{\bf{X}}}_m\right)$ is a constant, maximizing $\hat{U}_m\left(\hat{{\bf{X}}}_m\right)$ is equivalent to maximizing the expected number of cache hits $\sum\limits_{k\in\mathcal{K}}\sum\limits_{i\in\mathcal{I}}{p}_{k,i}\hat{x}_{m,i}\mathbb{I}_{1}\left(m,k,i\right)\mathbb{I}_{2}\left(m,k,i\right)$. The expected number of cache hits when placing model $i$ on edge server $m$ is 
\begin{equation}\label{eq_utility}
	u\left(m,i\right)=\sum\limits_{k\in\mathcal{K}}{p}_{k,i}\mathbb{I}_{1}\left(m,k,i\right)\mathbb{I}_{2}\left(m,k,i\right).
\end{equation}
Assume that $u\left(m,i\right)$ is a fixed-point value, and the granularity of $u\left(m,i\right)$ for models in $\mathcal{I}_{\mathcal{N}}$ is $\delta_{m,\mathcal{N}}$, which is related to the precision of $u\left(m,i\right)$. In the above example, $\mathcal{I}_{\mathcal{N}}=\{1,2\}$. If $u\left(m,1\right) = 0.12$ and $u\left(m,2\right)=0.14$, the precision of $u\left(m,1\right)$ and $u\left(m,2\right)$ is two decimal places, and $\delta_{m,\mathcal{N}}=0.01$. Thus, we can say that the number of cache hits of edge server $m$ has at most $\frac{\sum\limits_{i\in\mathcal{I}_\mathcal{N}}u\left(m,i\right)}{\delta_{m,\mathcal{N}}} + 1$ possible values for the currently traversed $\mathcal{N}$. Let $\mathcal{T}\left(e_{\mathcal{N}},w_{\mathcal{N}}\right)$ be the smallest data size of specific parameter blocks that have to be cached on edge server $m$ for achieving the $w_{\mathcal{N}}$-th value of the number of cache hits with the first $e_{\mathcal{N}}$ models in $\mathcal{I}_\mathcal{N}$, where $e_{\mathcal{N}}\in\left\{0,1,\dots,\left|\mathcal{I}_\mathcal{N}\right|\right\}$, and $w_{\mathcal{N}}\in\left\{0,1,\dots,\frac{\sum\limits_{i\in\mathcal{I}_\mathcal{N}}u\left(m,i\right)}{\delta_{m,\mathcal{N}}}\right\}$. Moreover, the initial value of $\mathcal{T}\left(e_{\mathcal{N}},w_{\mathcal{N}}\right)$ is
\begin{equation}
\mathcal{T}\left(e_{\mathcal{N}},w_{\mathcal{N}}\right) = \left\{ {\begin{array}{*{20}{c}}
		{\infty,\ \text{if } w_{\mathcal{N}}\ne0,}\\
		{0,\ \text{if }w_{\mathcal{N}}=0.}
\end{array}} \right.
\end{equation}
The state-transition equation is
\begin{equation}\label{eq_dp}
  \resizebox{1\hsize}{!}{$  \begin{aligned}	&\mathcal{T}\left(e_{\mathcal{N}},w_{\mathcal{N}}\right)\\
    &=\left\{ {\begin{array}{*{20}{c}}
                    \begin{aligned}
				&{\min\left\{
					\begin{aligned}
					&\mathcal{T}\left(e_{\mathcal{N}}-1,w_{\mathcal{N}}\right),\\
					&\mathcal{T}\left(e_{\mathcal{N}}-1,w_{\mathcal{N}}-\frac{u\left(m,e_{\mathcal{N}}\right)}{\delta_{m,\mathcal{N}}}\right)\\
                    &+D_{\mathcal{N}}\left(e_{\mathcal{N}}\right)
					\end{aligned}
					\right\},\ \text{if } w_{\mathcal{N}}\ge\frac{u\left(m,e_{\mathcal{N}}\right)}{\delta_{m,\mathcal{N}}},}\\
				&{\mathcal{T}\left(e_{\mathcal{N}}-1,w_{\mathcal{N}}\right),\ \text{if }w_{\mathcal{N}}<\frac{u\left(m,e_{\mathcal{N}}\right)}{\delta_{m,\mathcal{N}}}.}
                    \end{aligned}
		\end{array}} \right.
    \end{aligned}$}
\end{equation}
Finally, after updating all $\mathcal{T}\left(e_{\mathcal{N}},w_{\mathcal{N}}\right)$, the maximum expected number of cache hits of edge server $m$ with $\mathcal{N}$, i.e., $w^*_{\mathcal{N}}\delta_{m,\mathcal{N}}$, can be determined, where 
\begin{equation}
w^*_{\mathcal{N}}=\mathop{\arg\max}\limits_{w_{\mathcal{N}}}\{w_{\mathcal{N}}\mid\mathcal{T}\left(\left|\mathcal{I}_{\mathcal{N}}\right|,w_{\mathcal{N}}\right) \le Q_m-d_{\mathcal{N}}\}.
\end{equation}

Furthermore, after traversing all feasible combinations of shared parameter blocks in $\mathcal{A}$, we can obtain the maximum expected number of cache hits of edge server $m$, denoted by $w^*_{\mathcal{N}^*}\delta_{m,\mathcal{N}^*}$, where $\mathcal{N}^*=\mathop{\arg\max}\limits_{\mathcal{N}}\{w^*_{\mathcal{N}}\delta_{m,\mathcal{N}}\}$. We denote the model caching decision corresponding to $w^*_{\mathcal{N}^*}\delta_{m,\mathcal{N}^*}$ by $\hat{{\bf{X}}}_m$, and the detailed process of obtaining $\hat{{\bf{X}}}_m$ is shown in the extended version of this work \cite{qu2024trimcaching}. Additionally, the maximum cache hit ratio of edge server $m$ is denoted by
\begin{equation}\label{eq_sp_optimal}
\hat{U}_m\left(\hat{{\bf{X}}}_m\right) = \frac{w^*_{\mathcal{N}^*}\delta_{m,\mathcal{N}^*}}{\sum\limits_{k\in\mathcal{K}}\sum\limits_{i\in\mathcal{I}}{p}_{k,i}}=\frac{\sum\limits_{i\in\hat{\mathcal{I}}_{m}}{u}\left(m,i\right)}{\sum\limits_{k\in\mathcal{K}}\sum\limits_{i\in\mathcal{I}}{p}_{k,i}},
\end{equation} 
where $\hat{\mathcal{I}}_{m} = \left\{i\mid\hat{x}_{m,i}\in\hat{{\bf{X}}}_{m}\right\}$. Thus, the optimal solution to $\mathcal{P}2.1_m$ is obtained.

To further accelerate the computation of \eqref{eq_sp_optimal}, we propose a rounding method to balance precision and computational efficiency. When determining the maximum expected number of cache hits of $\mathcal{I}_\mathcal{N}$, a smaller value of $\delta_{m,\mathcal{N}}$ leads to 
a larger number of feasible $w_{\mathcal{N}}$, further increasing the complexity. To facilitate the execution, we 
round $u\left(m,i\right)$ into 
\begin{equation}
\dot{u}\left(m,i\right) = \lfloor\frac{u\left(m,i\right)}{\epsilon u_{m,\min}}\rfloor,
\end{equation}
where $u_{m,\min} = \mathop{\min}\limits_{i\in\mathcal{I}}u\left(m,i\right)$, and $\epsilon\in\left(0,1\right]$ is a constant. As a result, when determining the maximum expected cache hit ratio for edge server $m$, some $u\left(m,i\right)$ of models in $\mathcal{I}_{\mathcal{N}}$ can be rounded into the same value, and the number of feasible values of $w_{\mathcal{N}}$ can be decreased, thereby accelerating the DP process. We denote $\dot{\delta}_{m,\mathcal{N}}$ by the granularity of $\dot{u}\left(m,i\right)$ for models in $\mathcal{I}_{\mathcal{N}}$. Similarly, $\mathcal{T}\left(e_{\mathcal{N}},\dot{w}_{\mathcal{N}}\right)$ is the smallest data size of specific parameter blocks that have to be cached on edge server $m$ for achieving the $\dot{w}_{\mathcal{N}}$-th value of the cache hits with the first $e_{\mathcal{N}}$ models.
Besides, we define $\dot{w}^*_{\mathcal{N}}=\mathop{\arg\max}\limits_{\dot{w}_{\mathcal{N}}}\{\dot{w}_{\mathcal{N}}\mid\mathcal{T}\left(\left|\mathcal{I}_{\mathcal{N}}\right|,\dot{w}_{\mathcal{N}}\right) \le Q_m-d_{\mathcal{N}}\}$ and $\dot{w}^*_{\mathcal{N}^*}\delta_{m,\mathcal{N}^*}=\mathop{\max}\limits_{\mathcal{N}}\{\dot{w}^*_{\mathcal{N}}\delta_{m,\mathcal{N}}\}$. At last, in $\mathcal{P}2.1_{m}$, the caching decision corresponding to $\dot{w}^*_{\mathcal{N}^*}$ is denoted by $\dot{{\bf{X}}}_m$, and 
\begin{equation}
	\hat{U}_m\left(\dot{{\bf{X}}}_m\right) = \frac{\sum\limits_{i\in\dot{\mathcal{I}}_{m}}{u}\left(m,i\right)}{\sum\limits_{k\in\mathcal{K}}\sum\limits_{i\in\mathcal{I}}{p}_{k,i}},
\end{equation} 
where $\dot{\mathcal{I}}_{m} = \left\{i\mid\hat{x}_{m,i}\in\dot{{\bf{X}}}_{m}\right\}$. The DP-based rounding algorithm is shown in Algorithm \ref{algorithm_DP}. 

\begin{algorithm}[!ht]
	\caption{DP-based Rounding Algorithm} 
	\label{algorithm_DP}
	\LinesNumbered
	\KwIn{$\epsilon$, $\mathcal{K}$, $\mathcal{I}$.}
	\KwOut{$\hat{U}_{m}\left(\dot{{\bf{X}}}_m\right)$, $\dot{{\bf{X}}}_m$.} 
	{\bf Initialize:} $\hat{U}_{m}\left(\dot{{\bf{X}}}_m\right) = 0$, $\dot{{\bf{X}}}_m=\emptyset$. Calculate $u\left(m,i\right)$ from \eqref{eq_utility}.\\
	\For{$\mathcal{N}\in\mathcal{A}$}
	{
		Calculate the occupied storage $d_{\mathcal{N}}$ for $\mathcal{N}$.\\
		\If{$d_{\mathcal{N}}>Q_m$}
		{
			\textbf{Continue}.\\
		}
            \eIf{$\epsilon = 0$}
	      {
		      $\dot{u}\left(m,i\right)=u\left(m,i\right)$.\\
	    }
	    {
		      $\dot{u}\left(m,i\right)=\lfloor\frac{u\left(m,i\right)}{\epsilon u_{m,\min}}\rfloor$.
	    }
		Calculate the granularity $\dot{\delta}_{m,\mathcal{N}}$ of $\dot{u}\left(m,i\right)$.\\
		\For{$e_{\mathcal{N}}\in\mathcal{I}_\mathcal{N}$}
		{
			Calculate $D_{\mathcal{N}}\left(e_{\mathcal{N}}\right)$ from \eqref{eq_size}. \\
		}
		Set the initial value of $\mathcal{T}\left(e_{\mathcal{N}},\dot{w}_{\mathcal{N}}\right)$, $\mathcal{T}\left(e_{\mathcal{N}},\dot{w}_{\mathcal{N}}\right) = \left\{ {\begin{array}{*{20}{c}}
				{\infty,\ \text{if } \dot{w}_{\mathcal{N}}\ne0,}\\
				{0,\ \text{if }\dot{w}_{\mathcal{N}}=0.}
		\end{array}} \right.$\\
		\For{$e_{\mathcal{N}}\in\left\{1,\dots,\left|\mathcal{I}_{\mathcal{N}}\right|\right\}$}
		{
			\For{$\dot{w}_{\mathcal{N}}\in\left\{1,\dots,\frac{\sum\limits_{i\in\mathcal{I}_\mathcal{N}}\dot{u}\left(m,i\right)}{\dot{\delta}_{m,\mathcal{N}}}\right\}$}
			{
				\eIf{$\dot{u}\left(m,e_{\mathcal{N}}\right)\le \dot{w}_{\mathcal{N}}\dot{\delta}_{m,\mathcal{N}}$}
				{
                        $\mathcal{T}\left(e_{\mathcal{N}},\dot{w}_{\mathcal{N}}\right) = \min\left\{
					\begin{aligned}
						&\mathcal{T}\left(e_{\mathcal{N}}-1,\dot{w}_{\mathcal{N}}\right),\\
                            &\mathcal{T}\left(e_{\mathcal{N}}-1,\dot{w}_{\mathcal{N}}-\frac{\dot{u}\left(m,e_{\mathcal{N}}\right)}{\dot{\delta}_{m,\mathcal{N}}}\right) \\
                            &+ D_{\mathcal{N}}\left(e_{\mathcal{N}}\right)
					\end{aligned}
                        \right\}
                        $.\\
				}
				{
					$\mathcal{T}\left(e_{\mathcal{N}},\dot{w}_{\mathcal{N}}\right) = \mathcal{T}\left(e_{\mathcal{N}}-1,\dot{w}_{\mathcal{N}}\right)$. 
				}
			}
		}
		$\dot{w}^*_{\mathcal{N}}=\mathop{\arg\max}\limits_{\dot{w}_{\mathcal{N}}}\{\dot{w}_{\mathcal{N}}\mid\mathcal{T}\left(\left|\mathcal{I}_{\mathcal{N}}\right|,\dot{w}_{\mathcal{N}}\right) \le Q_m-d_{\mathcal{N}}\}$, and the corresponding model caching decision to $\dot{w}^*_{\mathcal{N}}$ is $\dot{{\bf{X}}}_{m,\mathcal{N}}$ . \\
        $\hat{U}_m\left(\dot{{\bf{X}}}_{m,\mathcal{N}}\right)=\frac{\sum\limits_{i\in\dot{\mathcal{I}}_{m,\mathcal{N}}}u\left(m,i\right)}{\sum\limits_{k\in\mathcal{K}}\sum\limits_{i\in\mathcal{I}}{p}_{k,i}}$, where $\dot{\mathcal{I}}_{m,\mathcal{N}} = \left\{i\mid\hat{x}_{m,i}\in\dot{{\bf{X}}}_{m,\mathcal{N}}\right\}$.\\
		\If{$\hat{U}_m\left(\dot{{\bf{X}}}_{m,\mathcal{N}}\right)>\hat{U}_{m}\left(\dot{{\bf{X}}}_m\right)$}
		{
			$\hat{U}_{m}\left(\dot{{\bf{X}}}_m\right)=\hat{U}_m\left(\dot{{\bf{X}}}_{m,\mathcal{N}}\right)$, and $\dot{{\bf{X}}}_m=\dot{{\bf{X}}}_{m,\mathcal{N}}$.\\
		}
	}
\end{algorithm}\par
Algorithm \ref{algorithm_DP} ensures an $\epsilon$-optimal solution. The detailed results are summarized as follows. 
\begin{proposition}\label{DP_epsilon}
	The cache hit ratio produced by Algorithm \ref{algorithm_DP} satisfies $\hat{U}_{m}\left(\dot{{\bf{X}}}_m\right)\ge\left(1-\epsilon\right)\hat{U}_{m}\left(\hat{{\bf{X}}}^*_m\right)$, where $\hat{{\bf{X}}}^*_m$ is the optimal solution to $\mathcal{P}2.1_{m}$ under the successive greedy algorithm.
\end{proposition}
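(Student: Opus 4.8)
The plan is to transfer the classical fully polynomial-time approximation scheme (FPTAS) analysis of the $0/1$ knapsack problem to the rounded dynamic program. The observation that organizes everything is structural: Algorithm~\ref{algorithm_DP} loops over every shared-block combination $\mathcal{N}\in\mathcal{A}$, runs an \emph{exact} inner DP on the integer-valued rounded utilities $\dot{u}(m,i)$, and finally keeps the best $\mathcal{N}$. The exact-utility DP of the previous subsection already returns the optimum $\hat{{\bf{X}}}^*_m$ of $\mathcal{P}2.1_m$ at some combination $\mathcal{N}^{\star}\in\mathcal{A}$. Hence it suffices to prove the $(1-\epsilon)$ bound \emph{within} a single fixed $\mathcal{N}$ and then invoke the outer maximization: since $\hat{U}_m(\dot{{\bf{X}}}_m)\ge\hat{U}_m(\dot{{\bf{X}}}_{m,\mathcal{N}^{\star}})$, a per-$\mathcal{N}$ guarantee at $\mathcal{N}^{\star}$ yields the global one. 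Because the denominator $\sum_{k}\sum_{i}p_{k,i}$ is a common positive constant, I would carry the whole argument in terms of the unnormalized hit counts $\sum_i u(m,i)$ and normalize only at the very end.

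First I would extract the two-sided bound implied by the floor in $\dot{u}(m,i)=\lfloor u(m,i)/(\epsilon u_{m,\min})\rfloor$. Multiplying through by $\epsilon u_{m,\min}>0$ gives
\begin{equation}
u(m,i)-\epsilon u_{m,\min}<\epsilon u_{m,\min}\,\dot{u}(m,i)\le u(m,i),
\end{equation}
so scaling the rounded utility back underestimates the true utility by strictly less than $\epsilon u_{m,\min}$ per model. This uniform per-model additive error is the engine of the proof.

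Next, fix $\mathcal{N}$ and let $S^{*}_{\mathcal{N}}$ be the model set an exact-utility DP would select (the true per-$\mathcal{N}$ optimum) and $S_{\mathcal{N}}=\dot{\mathcal{I}}_{m,\mathcal{N}}$ the set the rounded DP returns. Both respect the same residual budget $Q_m-d_{\mathcal{N}}$, and the rounded DP is optimal for the integers $\dot{u}(m,i)$, so feasibility of $S^{*}_{\mathcal{N}}$ forces $\sum_{i\in S_{\mathcal{N}}}\dot{u}(m,i)\ge\sum_{i\in S^{*}_{\mathcal{N}}}\dot{u}(m,i)$. Substituting the two rounding bounds then chains into
\begin{equation}
\begin{aligned}
\sum_{i\in S_{\mathcal{N}}}u(m,i)&\ge\epsilon u_{m,\min}\sum_{i\in S_{\mathcal{N}}}\dot{u}(m,i)\ge\epsilon u_{m,\min}\sum_{i\in S^{*}_{\mathcal{N}}}\dot{u}(m,i)\\
&>\sum_{i\in S^{*}_{\mathcal{N}}}u(m,i)-\lvert S^{*}_{\mathcal{N}}\rvert\,\epsilon u_{m,\min}.
\end{aligned}
\end{equation}
The decisive move is to turn the additive loss $\lvert S^{*}_{\mathcal{N}}\rvert\,\epsilon u_{m,\min}$ into a multiplicative $\epsilon$-factor: since every placed model obeys $u(m,i)\ge u_{m,\min}$, the loss satisfies $\lvert S^{*}_{\mathcal{N}}\rvert\,u_{m,\min}\le\sum_{i\in S^{*}_{\mathcal{N}}}u(m,i)$, so that $\sum_{i\in S_{\mathcal{N}}}u(m,i)\ge(1-\epsilon)\sum_{i\in S^{*}_{\mathcal{N}}}u(m,i)$. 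Dividing by the constant denominator and maximizing over $\mathcal{N}$ (specialized at $\mathcal{N}^{\star}$) finishes the argument.

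I expect the main obstacle to be exactly this additive-to-multiplicative conversion: it is what pins down the particular scaling by $u_{m,\min}$ in the rounding rule, and it tacitly requires $u_{m,\min}>0$, so I would first discard any model with $u(m,i)=0$ (placing it never raises the objective, hence no optimality is lost). A secondary point needing care is that the per-$\mathcal{N}$ optimum composed with the outer $\max$ genuinely dominates the global optimum $\hat{{\bf{X}}}^*_m$; this holds because $\hat{{\bf{X}}}^*_m$ is itself realized by an admissible combination in $\mathcal{A}$, so comparing against that single combination is enough.
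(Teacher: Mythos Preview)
Your proposal is correct and follows the classical FPTAS argument for the $0/1$ knapsack problem, which is precisely the structure the paper's DP-based rounding (Algorithm~\ref{algorithm_DP}) is designed around; the paper omits the proof here and defers to its extended version, but given that the rounding rule $\dot{u}(m,i)=\lfloor u(m,i)/(\epsilon u_{m,\min})\rfloor$ and the value-indexed DP table $\mathcal{T}(e_{\mathcal{N}},\dot{w}_{\mathcal{N}})$ are exactly those of the textbook knapsack FPTAS, your chain of inequalities and the additive-to-multiplicative conversion via $|S^{*}_{\mathcal{N}}|\,u_{m,\min}\le\sum_{i\in S^{*}_{\mathcal{N}}}u(m,i)$ are almost certainly the intended route. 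Your two caveats (discarding models with $u(m,i)=0$ so that $u_{m,\min}>0$, and noting that the global optimum $\hat{{\bf{X}}}^{*}_m$ is realized at some $\mathcal{N}^{\star}\in\mathcal{A}$ so the outer maximization suffices) are the right hygiene steps.
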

 \begin{proof}
 	The proof is omitted due to page limitations, which can be found in the extended version of this work \cite{qu2024trimcaching}.
 \end{proof}\par 
\begin{theorem}
    The TrimCaching Spec algorithm has a polynomial-time computational complexity $O\left(MI\right)$.
\end{theorem}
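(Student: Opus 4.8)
The plan is to bound the running time by unwinding the two nested layers of the method: the outer successive-greedy loop of Algorithm~\ref{algorithm_successive_greedy} over the $M$ edge servers, and the DP-based rounding routine of Algorithm~\ref{algorithm_DP} that is invoked exactly once per server. Because the cache-hit ratio decomposes additively as $U(\hat{{\bf X}})=\sum_{m\in\mathcal{M}}\hat{U}_m(\hat{{\bf X}}_m)$ and the servers are processed sequentially, the total cost is $M$ times the per-server cost, plus the cost of refreshing $\mathbb{I}_2(m,k,i)$ between iterations; the latter touches each user--model pair a constant number of times and, treating the user count $K$ as a fixed input dimension (consistent with the $O(MI)$ accounting), is absorbed into the per-server bound. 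First I would therefore reduce the problem to bounding the cost of a single call to Algorithm~\ref{algorithm_DP}.

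For the per-server analysis, the first step is to invoke the special-case hypothesis, namely that the number of shared parameter blocks is fixed and independent of the library size. This makes the set $\mathcal{A}$ of shared-block combinations of constant cardinality, so the outer loop of Algorithm~\ref{algorithm_DP} over $\mathcal{N}\in\mathcal{A}$ contributes only a constant factor. Within one iteration for a combination $\mathcal{N}$, the preprocessing --- evaluating the utilities $u(m,i)$ from \eqref{eq_utility}, the occupied storage $d_{\mathcal{N}}$, the residual sizes $D_{\mathcal{N}}(e_{\mathcal{N}})$ from \eqref{eq_size}, and the rounded utilities $\dot{u}(m,i)$ --- each scans at most the $I$ models once and costs $O(I)$.

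The crux is the doubly nested loop that fills the table $\mathcal{T}(e_{\mathcal{N}},\dot{w}_{\mathcal{N}})$, whose cost is the product of the model count $|\mathcal{I}_{\mathcal{N}}|\le I$ and the number $W$ of attainable rounded cache-hit levels $\dot{w}_{\mathcal{N}}$. The main obstacle is to argue that $W$ does not grow with the problem scale, so that this product remains $O(I)$ rather than $O(I^2)$. Here I would lean on the rounding $\dot{u}(m,i)=\lfloor u(m,i)/(\epsilon u_{m,\min})\rfloor$, which collapses the utilities onto a coarse integer grid whose resolution is set by $\epsilon$ and by the ratio of $\max_{i}u(m,i)$ to $u_{m,\min}$ rather than by $I$; combined with the fixed granularity $\dot{\delta}_{m,\mathcal{N}}$, this caps the range of $\dot{w}_{\mathcal{N}}$ by an instance-level constant. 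This is the delicate point, since a naive bound on $\sum_{i\in\mathcal{I}_{\mathcal{N}}}\dot{u}(m,i)$ still carries a factor of $|\mathcal{I}_{\mathcal{N}}|$, and the argument must show that only a constant number of distinct rounded levels can ever be reached before the storage budget $Q_m-d_{\mathcal{N}}$ is exhausted.

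Finally I would assemble the pieces: an $O(I)$ per-combination cost times the constant $|\mathcal{A}|$ gives an $O(I)$ per-server cost, and multiplying by the $M$ servers of the outer loop yields the claimed $O(MI)$ complexity. Every step except the state-count bound on $W$ is a direct reading of the loop bounds in Algorithms~\ref{algorithm_successive_greedy} and~\ref{algorithm_DP}, so the whole proof hinges on controlling the DP state space introduced by the rounding.
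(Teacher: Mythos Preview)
Your overall skeleton matches the paper's: the outer loop contributes a factor of $M$, the enumeration over $\mathcal{A}$ contributes only a constant $2^{\beta}$ because the number $\beta$ of shared parameter blocks is fixed by hypothesis, and the remaining work per combination is dominated by the DP fill. The divergence is precisely at the ``delicate point'' you flagged, namely bounding the DP table width $W$.

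The paper does not try to argue that the rounding collapses the number of attainable $\dot{w}_{\mathcal{N}}$ levels, nor does it invoke the storage budget to cap the state space. Instead it uses the much simpler observation that
\[
\sum_{i\in\mathcal{I}_{\mathcal{N}}} u(m,i) \;\le\; \sum_{k\in\mathcal{K}}\sum_{i\in\mathcal{I}} p_{k,i} \;=:\; p,
\]
which follows directly from the definition $u(m,i)=\sum_{k}p_{k,i}\,\mathbb{I}_1\mathbb{I}_2$. Hence the number of $w_{\mathcal{N}}$ levels is at most $p/\delta_{\min}+1$, where $\delta_{\min}$ is the granularity of the request probabilities $p_{k,i}$. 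The paper then treats $p/\delta_{\min}$ as a constant independent of $M$ and $I$ (in the same spirit as you absorb $K$), so the DP fill costs $O(I\cdot p/\delta_{\min})=O(I)$ per combination, giving $O(2^{\beta}\,p/\delta_{\min}\,MI)=O(MI)$ overall.

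Your two suggested routes are therefore unnecessary, and as you yourself note, neither one clearly closes the gap: the rounding $\dot{u}(m,i)=\lfloor u(m,i)/(\epsilon u_{m,\min})\rfloor$ bounds each \emph{term} but not the \emph{sum} (which still carries an $|\mathcal{I}_{\mathcal{N}}|$ factor in the worst case), and the storage-exhaustion idea has no obvious relation to the range of $\dot{w}_{\mathcal{N}}$. Replace that step with the bound $\sum_i u(m,i)\le p$ and the rest of your argument goes through exactly as in the paper.
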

\begin{proof}
Assuming $\beta$ is the number of shared parameter blocks, Algorithm \ref{algorithm_DP} requires at most $2^{\beta}$ iterations to traverse all combinations of shared parameter blocks. Moreover, the complexity for updating $\mathcal{T}\left(e_{\mathcal{N}},\dot{w}_{\mathcal{N}}\right)$ in Algorithm \ref{algorithm_DP} is $O\left(I\frac{{p}}{\delta_{\min}}\right)$, where ${p} = \sum\limits_{k\in\mathcal{K}}\sum\limits_{i\in\mathcal{I}}{p}_{k,i}$ and $\delta_{\min}$ is the granularity of $p_{k,i}$. Therefore, the overall complexity of Algorithm \ref{algorithm_DP} is $O\left(2^{\beta}I\frac{{p}}{\delta_{\min}}\right)$. Since we only need to check at most $M$ edge servers in the TrimCaching Spec algorithm, the total complexity of the TrimCaching Spec algorithm is $O\left(2^{\beta}\frac{{p}}{\delta_{\min}}MI\right)$. Note that $\beta$ is a constant independent of the problem scale as mentioned for the considered special case (which is usually small in practice). Thus, the complexity of the TrimCaching Spec algorithm is $O\left(MI\right)$. 
\end{proof}
\begin{theorem}\label{theorem_spec_final}
The TrimCaching Spec algorithm can obtain a solution $\hat{{\bf{X}}}$ for $\mathcal{P}1.1$ satisfying $U\left(\hat{{\bf{X}}}\right)\ge\frac{1-\epsilon}{2}U\left({\bf{X}}^*\right)$.
\end{theorem}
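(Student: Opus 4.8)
The plan is to combine the per-subproblem accuracy guarantee of Proposition~\ref{DP_epsilon} with the successive-greedy analysis underlying Proposition~\ref{Successive_greedy_optimal}, but carried out along the \emph{actual} trajectory produced by the DP-based rounding rather than an idealized exact trajectory. Write $\hat{\mathbf{X}}=\bigcup_{m\in\mathcal{M}}\hat{\mathbf{X}}_m$ for the output of TrimCaching Spec, where $\hat{\mathbf{X}}_m$ is the (generally $\epsilon$-suboptimal) placement returned by Algorithm~\ref{algorithm_DP} on edge server $m$, and let $\hat{\mathbf{X}}_m^{*}$ denote the \emph{exact} optimizer of $\mathcal{P}2.1_m$ \emph{given the prior decisions} $\hat{\mathbf{X}}_1,\dots,\hat{\mathbf{X}}_{m-1}$. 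Let $\mathbf{X}^{*}=\bigcup_{m}\mathbf{X}^{*}_m$ be the global optimum of $\mathcal{P}1.1$, with $\mathbf{X}^{*}_m$ its restriction to server $m$. I would retain the decomposition $U(\hat{\mathbf{X}})=\sum_m \hat{U}_m(\hat{\mathbf{X}}_m)$ already established in the excerpt.

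First I would record the per-server accuracy: by Proposition~\ref{DP_epsilon}, $\hat{U}_m(\hat{\mathbf{X}}_m)\ge(1-\epsilon)\hat{U}_m(\hat{\mathbf{X}}_m^{*})$ for every $m$. Next I would observe that $\mathbf{X}^{*}_m$ is itself feasible for $\mathcal{P}2.1_m$, since $\mathbf{X}^{*}$ obeys the per-server storage budget $Q_m$ and the constraint of $\mathcal{P}2.1_m$ is exactly that budget; hence optimality of $\hat{\mathbf{X}}_m^{*}$ gives $\hat{U}_m(\mathbf{X}^{*}_m)\le\hat{U}_m(\hat{\mathbf{X}}_m^{*})$. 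The third step is a submodular telescoping bound. Using monotonicity of $U$ and then adding $\mathbf{X}^{*}_1,\dots,\mathbf{X}^{*}_M$ one server at a time on top of $\hat{\mathbf{X}}$, the diminishing-returns property of $U$ from Proposition~\ref{proposition_submodular} lets me replace each marginal gain taken over the large prefix by the marginal taken over the algorithm's prefix $\hat{\mathbf{X}}^{(m-1)}=\bigcup_{m'<m}\hat{\mathbf{X}}_{m'}$, and by construction of the indicator $\mathbb{I}_2$ in \eqref{eq_uhat} this latter marginal is precisely $\hat{U}_m(\mathbf{X}^{*}_m)$. This yields
\[
U(\mathbf{X}^{*})\le U(\mathbf{X}^{*}\cup\hat{\mathbf{X}})\le U(\hat{\mathbf{X}})+\sum_{m\in\mathcal{M}}\hat{U}_m(\mathbf{X}^{*}_m).
\]

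Finally I would chain the first two steps: $\sum_m\hat{U}_m(\mathbf{X}^{*}_m)\le\sum_m\hat{U}_m(\hat{\mathbf{X}}_m^{*})\le\frac{1}{1-\epsilon}\sum_m\hat{U}_m(\hat{\mathbf{X}}_m)=\frac{1}{1-\epsilon}U(\hat{\mathbf{X}})$. Substituting into the telescoping bound gives $U(\mathbf{X}^{*})\le\bigl(1+\tfrac{1}{1-\epsilon}\bigr)U(\hat{\mathbf{X}})=\frac{2-\epsilon}{1-\epsilon}U(\hat{\mathbf{X}})$, so $U(\hat{\mathbf{X}})\ge\frac{1-\epsilon}{2-\epsilon}U(\mathbf{X}^{*})\ge\frac{1-\epsilon}{2}U(\mathbf{X}^{*})$, which is the claim (and in fact marginally stronger). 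The main obstacle I anticipate is the coupling between the rounding error and the residual demand: because Algorithm~\ref{algorithm_DP} solves each subproblem only $(1-\epsilon)$-optimally, the residual-demand indicators $\mathbb{I}_2$ evolve along a trajectory different from the exact successive greedy of Proposition~\ref{Successive_greedy_optimal}, so one cannot simply multiply the $\tfrac12$ guarantee by $(1-\epsilon)$; the telescoping must be re-established with respect to the algorithm's own prefixes. The delicate points are verifying that the per-server optimum $\hat{\mathbf{X}}_m^{*}$ in Proposition~\ref{DP_epsilon} is the one taken along this actual trajectory, and that the marginal of $\mathbf{X}^{*}_m$ over $\hat{\mathbf{X}}^{(m-1)}$ coincides exactly with $\hat{U}_m(\mathbf{X}^{*}_m)$, both of which hinge on the bookkeeping enforced by $\mathbb{I}_2$.
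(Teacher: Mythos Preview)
Your proof is correct, and it takes a genuinely different route from the paper's. The paper simply chains Proposition~\ref{DP_epsilon} and Proposition~\ref{Successive_greedy_optimal}: it writes $U(\hat{\mathbf X})=\sum_m\hat U_m(\dot{\mathbf X}_m)\ge(1-\epsilon)\sum_m\hat U_m(\hat{\mathbf X}_m^{*})$, then identifies $\sum_m\hat U_m(\hat{\mathbf X}_m^{*})$ with $U(\hat{\mathbf X}^{*})$ for the ``exact successive greedy'' solution $\hat{\mathbf X}^{*}$, and finally applies the $\tfrac12$-guarantee of Proposition~\ref{Successive_greedy_optimal} to $\hat{\mathbf X}^{*}$. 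In other words, the paper treats the two factors $(1-\epsilon)$ and $\tfrac12$ as multiplicative black boxes.

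What you do differently is precisely what you flag as the obstacle: you do not invoke Proposition~\ref{Successive_greedy_optimal} as a black box applied to some idealized greedy trajectory, but instead re-run the submodular telescoping directly over the \emph{rounded} prefixes $\hat{\mathbf X}^{(m-1)}$ actually produced by Algorithm~\ref{algorithm_DP}. This sidesteps the delicate point that the $\hat U_m$ appearing in Proposition~\ref{DP_epsilon} carry $\mathbb I_2$-indicators tied to the rounded trajectory, so the identity $\sum_m\hat U_m(\hat{\mathbf X}_m^{*})=U(\bigcup_m\hat{\mathbf X}_m^{*})$ and the applicability of Proposition~\ref{Successive_greedy_optimal} to that union are not immediate. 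Your route is more self-contained on this coupling issue and, as you note, yields the marginally sharper factor $\tfrac{1-\epsilon}{2-\epsilon}\ge\tfrac{1-\epsilon}{2}$. The paper's route is shorter because it reuses Proposition~\ref{Successive_greedy_optimal} wholesale; yours is longer but makes the trajectory bookkeeping explicit. Both are valid, and your identification of the marginal $U(\hat{\mathbf X}^{(m-1)}\cup\mathbf X_m^{*})-U(\hat{\mathbf X}^{(m-1)})$ with $\hat U_m(\mathbf X_m^{*})$ is exactly right since $\mathbf X_m^{*}$ lives only on server $m$.
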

\begin{proof}
    First, $U\left(\hat{{\bf{X}}}\right)=U\left(\bigcup\limits_{m\in\mathcal{M}}\dot{{\bf{X}}}_m\right) = \sum\limits_{m\in\mathcal{M}}\hat{U}_m\left(\dot{{\bf{X}}}_m\right)$ holds. Second, Proposition \ref{DP_epsilon} establishes $\sum\limits_{m\in\mathcal{M}}\hat{U}_m\left(\dot{{\bf{X}}}_m\right)\ge\sum\limits_{m\in\mathcal{M}}\left(1-\epsilon\right)\hat{U}_m\left(\hat{{\bf{X}}}^*_m\right)=\left(1-\epsilon\right)U\left(\bigcup\limits_{m\in\mathcal{M}}\hat{{\bf{X}}}_m^*\right)=\left(1-\epsilon\right)U\left(\hat{{\bf{X}}}^*\right)$, where $\hat{{\bf{X}}}^* = \bigcup\limits_{m\in\mathcal{M}}\hat{{\bf{X}}}_m^*$ is the optimal solution to $\mathcal{P}1.1$ under the successive greedy algorithm. Finally, Proposition \ref{Successive_greedy_optimal} confirms $\left(1-\epsilon\right)U\left(\hat{{\bf{X}}}^*\right)\ge\frac{1-\epsilon}{2}U\left({\bf{X}}^*\right)$, which completes the proof. 
\end{proof}

\section{The General Case: Arbitrary Parameter Sharing}
This section examines the general case of $\mathcal{P}1.1$, where models can arbitrarily share parameter blocks, in contrast to the special case with a small fixed number of shared parameter blocks. Formally speaking, the number of shared parameter blocks among models may increase with the scale of the model library, such that searching for all combinations of shared parameter blocks, as required by the TrimCaching Spec algorithm, results in an exponential time complexity in terms of the problem scale, which should be avoided in the general case.

Our proposed greedy algorithm for $\mathcal{P}1.1$ is outlined in Algorithm \ref{algorithm_greedy}. The key part of the TrimCaching Gen algorithm is in line 4. In the $l$-th step, the algorithm first calculates the incremental cache hit ratio for each $x_{m,i}$ based on the previous placement decision ${\bf{X}}^{l-1}$ made in the $l-1$-th step. Subsequently, the TrimCaching Gen algorithm identifies $\{m^*,i^*\}$ which yields the maximum increased cache hit ratio, i.e.,  $U\left({\bf{X}}^{l-1}\cup \left\{x_{m^*,i^*}\right\}\right)-U\left({\bf{X}}^{l-1}\right)$, while ensuring that $g_m\left({\bf{X}}_m^{l-1}\cup\{x_{m^*,i^*}\}\right)$ does not exceed the edge server capacity $Q_m$, where ${\bf{X}}_m^{l-1}$ is the model placement result of edge server $m$ in the $l-1$-th step. By adding $x_{m^*,i^*}$ into ${\bf{X}}^{l-1}$, we have ${\bf{X}}^l={\bf{X}}^{l-1}\cup\left\{x_{m^*,i^*}\right\}$. The TrimCaching Gen algorithm will repeat the above steps until all edge servers cannot cache any models. The proposed algorithm can solve $\mathcal{P}1.1$ with the time complexity of $O\left(MI\right)$. In addition, we have the following result for the TrimCaching Gen algorithm.\par 
\begin{algorithm}[!ht]
	\caption{TrimCaching Gen} 
	\label{algorithm_greedy}
	\LinesNumbered
	\KwIn{$\mathcal{K}$, $\mathcal{I}$, and $\mathcal{M}$.}
	\KwOut{${\bf{X}}$ and $U\left({\bf{X}}\right)$.} 
	{\bf Initialize:}
	$l=0$, ${\bf{X}}^l=\emptyset$, and $U\left({\bf{X}}^l\right)=0$.\\
	\While {There is an edge server that can continue to cache models.}
	{
		$l=l+1$.\\
			$\left\{m^*,i^*\right\} = \mathop{\arg\max}\limits_{m,i}\{U\left({\bf{X}}^{l-1}\cup \left\{x_{m,i}\right\}\right)-U\left({\bf{X}}^{l-1}\right)\mid g_m\left({\bf{X}}_m^{l-1}\cup\{x_{m,i}\}\right)\le Q_m\}$\\
		${\bf{X}}^l={\bf{X}}^{l-1}\cup\left\{x_{m^*,i^*}\right\}$
	}
	Output ${\bf{X}}={\bf{X}}^l$ and $U\left({\bf{X}}\right)$.\\
\end{algorithm}\par 
\begin{theorem}
Given the the solution to the TrimCaching Gen algorithm ${\bf{X}}=\bigcup\limits_{m\in\mathcal{M}}{\bf{X}}_m$, where ${\bf{X}}_m$ is the model placement decision of edge server $m$, the TrimCaching Gen algorithm can obtain a result satisfying $U\left({\bf{X}}\right)\ge\frac{1}{\Gamma}U\left({\bf{X}}^*\right)$. Here, $\Gamma=\max\{\left|{\bf{X}}\right|:g_m\left({\bf{X}}_m\right)\le Q_m,\forall m\in\mathcal{M}\}$.
\end{theorem}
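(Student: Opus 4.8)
The plan is to exploit the two structural facts already established for $U$, namely that it is submodular (Proposition~\ref{proposition_submodular}) and, in addition, monotone nondecreasing with $U(\emptyset)=0$: flipping any $x_{m,i}$ from $0$ to $1$ can only shrink a product term $\prod_{m}(1-x_{m,i}\mathbb{I}_1(m,k,i))$ and hence can only raise the numerator of $U$. Rather than comparing the marginal gains of the optimal elements against the \emph{last} greedy pick, which is the usual route for cardinality or knapsack constraints, I would compare everything against the \emph{first} greedy pick, because the algorithm here halts by infeasibility (every server is full), so one cannot assume that an element $x\in{\bf{X}}^*\setminus{\bf{X}}$ was rejected for having a small marginal gain. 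The heart of the argument is therefore to bound $U({\bf{X}}^*)$ from above by a sum of singleton values, and to bound each singleton value by $U({\bf{X}})$.

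First I would show $U({\bf{X}})\ge U(\{x\})$ for \emph{every} placement $x=x_{m,i}$ that is feasible as a singleton, i.e. with $g_m(\{x_{m,i}\})\le Q_m$. At its first iteration the TrimCaching Gen algorithm starts from ${\bf{X}}^0=\emptyset$ and selects $x_1=\arg\max\{U(\{x\}) : g_m(\{x_{m,i}\})\le Q_m\}$; since $\{x_1\}\subseteq{\bf{X}}$, monotonicity gives $U({\bf{X}})\ge U(\{x_1\})=\max_x U(\{x\})\ge U(\{x\})$ for any feasible singleton $x$. Next I would invoke subadditivity, which follows from submodularity together with $U(\emptyset)=0$: building ${\bf{X}}^*$ one element at a time and applying $U(S\cup\{x\})\le U(S)+U(\{x\})$ yields $U({\bf{X}}^*)\le\sum_{x\in{\bf{X}}^*}U(\{x\})$. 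Each $x\in{\bf{X}}^*$ is feasible as a singleton, because $g_m$ is monotone and $\{x_{m,i}\}\subseteq{\bf{X}}_m^*$ gives $g_m(\{x_{m,i}\})\le g_m({\bf{X}}_m^*)\le Q_m$; hence every term obeys $U(\{x\})\le U({\bf{X}})$. Combining the two bounds, together with $|{\bf{X}}^*|\le\Gamma$ (as ${\bf{X}}^*$ is feasible and $\Gamma$ is the largest cardinality of any feasible placement), gives $U({\bf{X}}^*)\le|{\bf{X}}^*|\,U({\bf{X}})\le\Gamma\,U({\bf{X}})$, which is exactly $U({\bf{X}})\ge\frac{1}{\Gamma}U({\bf{X}}^*)$.

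I expect the main obstacle to be conceptual rather than computational: precisely because the greedy process stops when the storage constraints become binding instead of when candidate elements run out, the standard telescoping-of-marginal-gains analysis — which needs every optimal element to have been a legal but inferior choice at termination — does not apply. Circumventing this is exactly what the detour through subadditivity and the \emph{first} marginal gain accomplishes, and it is the one step that genuinely relies on both monotonicity and $U(\emptyset)=0$ beyond the bare submodularity of Proposition~\ref{proposition_submodular}. The remaining pieces — monotonicity of $U$, monotonicity of each $g_m$ so that singletons of a feasible set stay feasible, and the cardinality bound $|{\bf{X}}^*|\le\Gamma$ — are routine once stated, so I would verify them briefly and then assemble the chain of inequalities above.
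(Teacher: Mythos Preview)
Your argument is correct. The paper does not actually prove this theorem: its entire ``proof'' is a pointer to \cite{iyer2013submodular,iyer2013fast}, so there is no in-paper derivation to compare against. Your route---subadditivity of $U$ (from submodularity and $U(\emptyset)=0$) to get $U({\bf{X}}^*)\le\sum_{x\in{\bf{X}}^*}U(\{x\})$, the first-iteration inequality $U({\bf{X}})\ge U(\{x_1\})\ge U(\{x\})$ for every feasible singleton, monotonicity of $g_m$ to certify that each $x\in{\bf{X}}^*$ is a feasible singleton, and $|{\bf{X}}^*|\le\Gamma$---is a clean, self-contained proof of the stated bound and is in the spirit of the greedy analyses in the cited references. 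The only small remark is that you implicitly assume the greedy picks at least one element; this is harmless because if no singleton is feasible then ${\bf{X}}^*=\emptyset$ and the inequality is trivial, and otherwise $x_1$ exists.
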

\begin{proof}
The proof is omitted due to page limitations. A similar proof can be found in \cite{iyer2013submodular,iyer2013fast}.
\end{proof}
Since the lower bound decreases as the number of AI models in the model library and the number of edge servers grows, there is no constant approximation guarantee for the algorithm. This coincides with Proposition \ref{NPhard} that no polynomial-time approximation algorithm exists for the general case.\par

\section{Numerical Results}
In this section, we will conduct simulations for the proposed TrimCaching framework.
\subsection{Simulation Setup}
In the simulation, $K$ users and $M$ edge servers are uniformly distributed in a square area of 1 $\text{km}^\text{2}$. The QoS requirements of model downloading latency and on-device inference latency for users are uniformly distributed in the range $\left[0.5,1\right]\ \text{s}$ \cite{3gpp.22.874}. The number of users is $K=\left\{10,20,30,40,50\right\}$. Suppose the coverage radius of edge servers is 275 m. The associated user set of edge server $m$ is denoted as $\mathcal{K}_m$. The expected bandwidth and transmit power assigned by edge server $m$ for its associated user $k$ are $\bar{B}_{m,k} = \frac{B}{{p}_\text{A}\left|\mathcal{K}_m\right|}$ and $\bar{P}_{m,k} = \frac{P}{{p}_\text{A}\left|\mathcal{K}_m\right|}$, where $B=400 \ \text{MHz}$ and $P=43\ \text{dBm}$ are the total bandwidth and transmit power of an edge server, respectively. We assume the active probability of a user is ${p}_\text{A}=0.5$. The communication data rate between edge servers is $C_{m,m'}=10\ {\text{Gbps}}$ \cite{gsmabachkaul,8488527}. We set $\gamma_0$ and $\alpha_0$ in \eqref{eq_communication} as 1 and 4, respectively. The number of edge servers is $M=\left\{6,8,10,12,14\right\}$. Besides, each edge server has identical storage capacity, denoted by $Q$, ranging from 0.5 to 1.5 GB. Note that the storage capacity of edge servers can be much larger than 1.5 GB in reality. However, the model library can also be significantly larger than our constructed parameter-sharing model library of 300 models, which will be introduced later. Due to our limited computing resources for model fine-tuning, we proportionally reduce the storage capacity of edge servers and the size of the model library, which will not impact the phenomenon observed in the experiments.\par 

The parameter-sharing model library is constructed based on the ResNet family, i.e., ResNet18, ResNet34, and ResNet50~\cite{he2016deep}, and the image classification dataset CIFAR100 \cite{krizhevsky2009learning}. There are 20 superclasses in the dataset, and each superclass includes 5 classes. For example, the superclass ``fish" includes class ``aquarium fish", ``flatfish", ``ray", ``shark", and ``trout". 
We fine-tune these three pre-trained models into 100 downstream classification models, respectively. Each model corresponds to each class, e.g., a classification model for sharks. 
The request probability of each end user for 300 models in each kind of model library obeys the Zipf distribution \cite{zipf1929relative}. 
For the special case and the general case, the parameter sharing among AI models is constructed as follows. 

\begin{itemize}
    \item \textbf{The special case}: In this case, the number of shared parameter blocks is independent of the scale of the model library because all of them are fine-tuned from three pre-trained models. Without loss of generality, we adopt the classic bottom layer freezing techniques to fine-tune these three models for the 100 downstream classification services. The number of frozen bottom layers (each corresponding to a shared parameter block in our system model) for ResNet18, ResNet34, and ResNet50 falls within the ranges of $\left[29,40\right]$, $\left[49,72\right]$, and $\left[87,106\right]$, respectively.
    
    By default, $\epsilon$ is set to 0.1 in Algorithm \ref{algorithm_DP}. 
     \item  \textbf{The general case}: In this case, we first fine-tune all parameters of the ResNet18, ResNet34, and ResNet50 on CIFAR100 with a few selected superclasses. Then, we fine-tune new models on sub-datasets in CIFAR100 with similar classes based on bottom layer freezing. The fine-tuned models for each class within every superclass in the second round reuse parameter blocks from models fine-tuned for the selected superclass in the first round, creating a large set of shared parameter blocks related to the scale of the model library. The fine-tuning settings are detailed in Table \ref{table_ft}.
\end{itemize}
\begin{table}[th]
	\centering
	\caption{Fine-tuning Settings}
	\label{table_ft}
	\begin{tabular}{|p{3cm}|p{5cm}|}
		\hline
		\textbf{First-round fine-tuning} & \textbf{Second-round fine-tuning}\\ \hline
		fruit and vegetables & flowers, trees\\ \hline
		medium-sized mammals & large carnivores, large omnivores and herbivores, people, reptiles, small mammals\\ \hline
		vehicles 2 & large man-made outdoor things, vehicles 1\\ \hline
	\end{tabular}
\end{table} \par 

For comparisons, three algorithms are considered:
\begin{itemize}
    \item \textbf{The TrimCaching Spec algorithm}: Algorithm \ref{algorithm_successive_greedy}, which is developed for the special case. 
     \item  \textbf{The TrimCaching Gen algorithm}: Algorithm \ref{algorithm_greedy}, which is developed for the general case. It can also be applied to the special case.
    \item \textbf{Independent Caching:} AI models are cached independently without considering their shared parameters, referring to traditional content placement schemes \cite{8374917}. 
\end{itemize}\par 
All simulation results presented below are averaged from 100 network topologies. For each network topology, we further simulate the cache hit ratio using over $10^3$ channel realizations over Rayleigh fading. Note that the placement decisions are obtained based on average channel gains while the cache hit performance is examined over the Rayleigh fading model.

\begin{figure*}[!ht]
	\centering
	\subfigure[Cache hit ratio v.s. $Q$, where $M=10$ and $I=30$. ]{\includegraphics[width=2in]{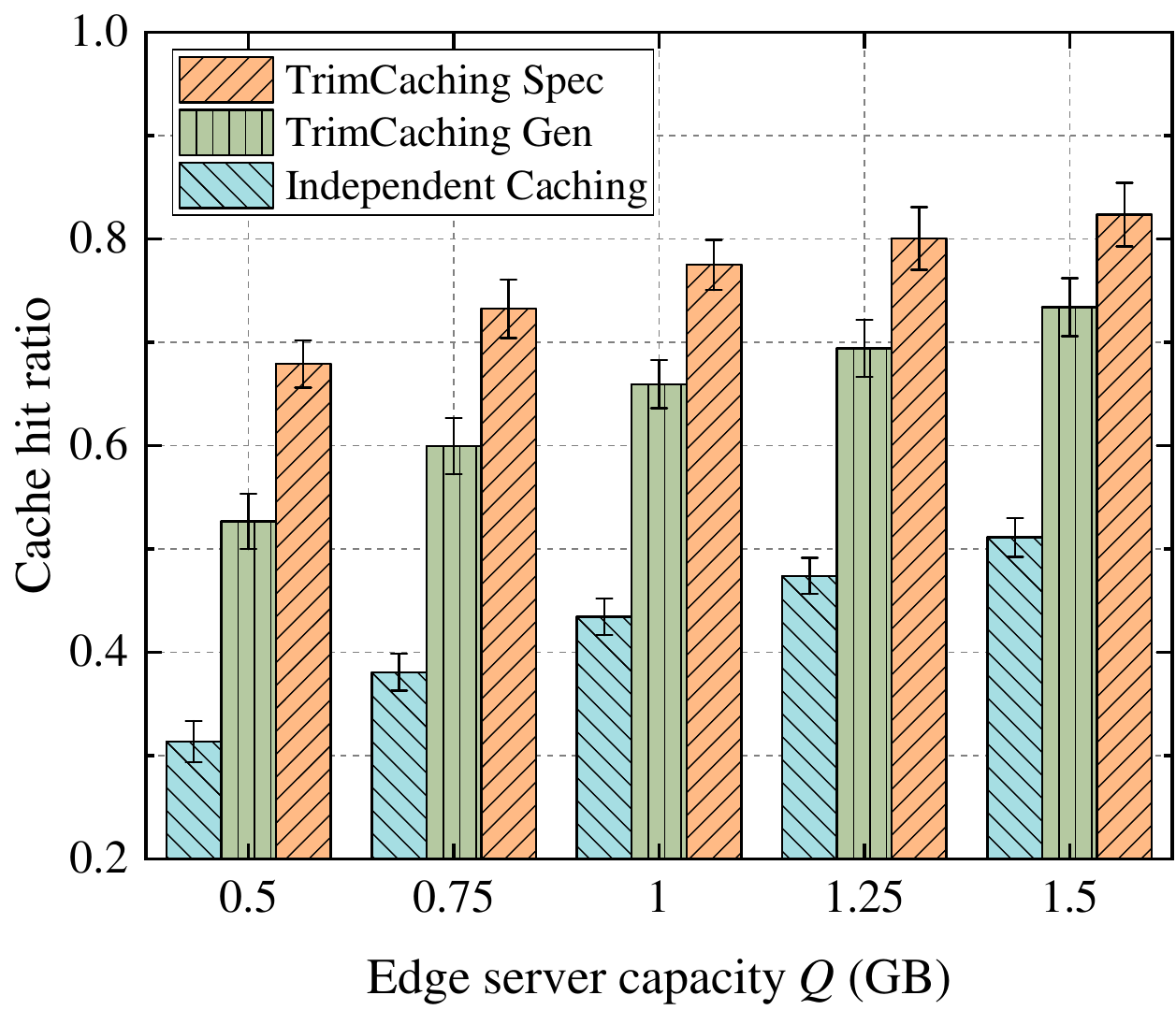}\label{fig_case2_capacity}}
	\quad
	\subfigure[Cache hit ratio v.s. $M$, where $Q=1 \ \text{GB}$ and $I=30$.]{\includegraphics[width=2in]{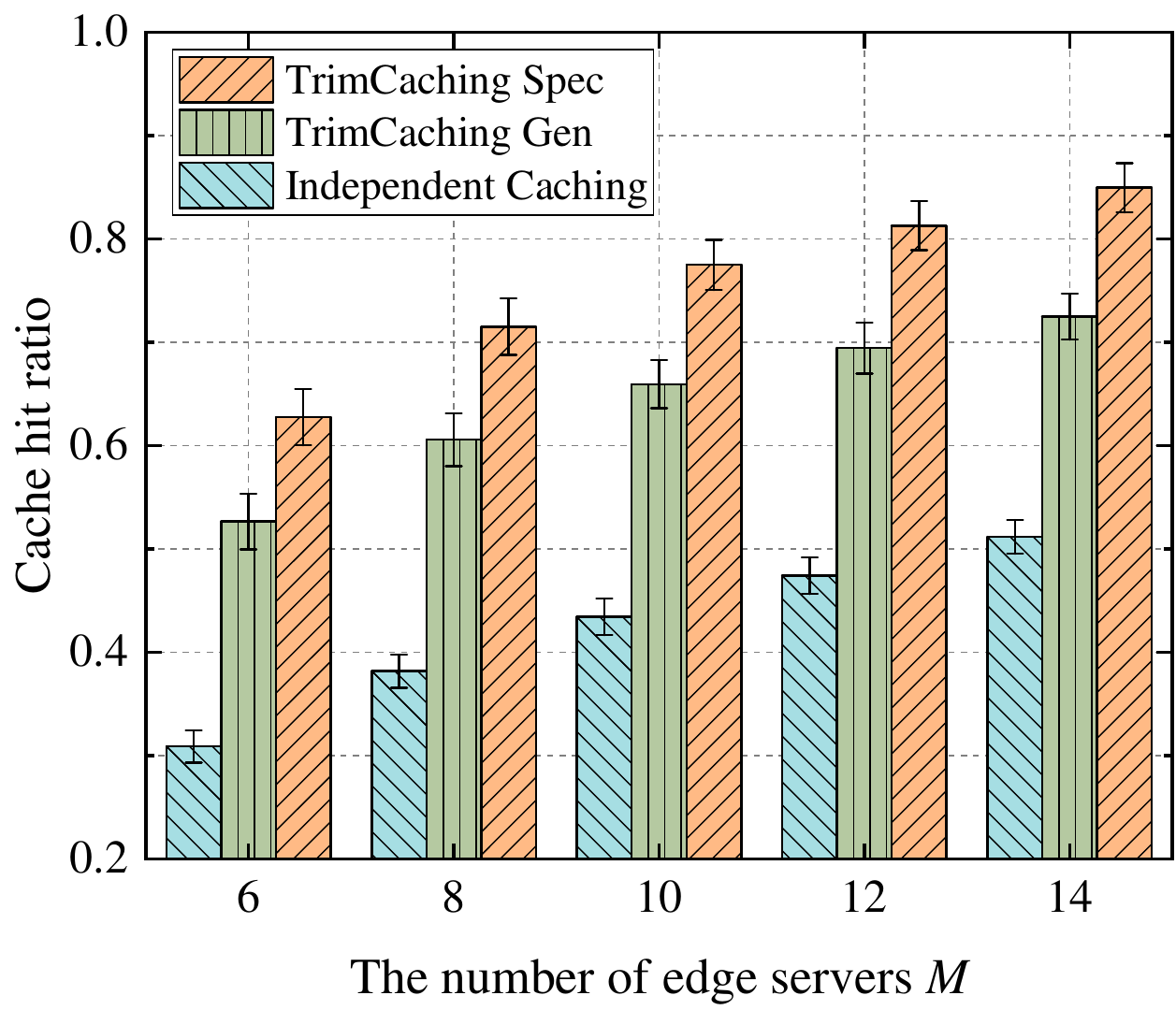}\label{fig_case2_edge}}
	\quad
	\subfigure[Cache hit ratio v.s. $K$, where $Q=1 \ \text{GB}$ and $M=10$.]{\includegraphics[width=2in]{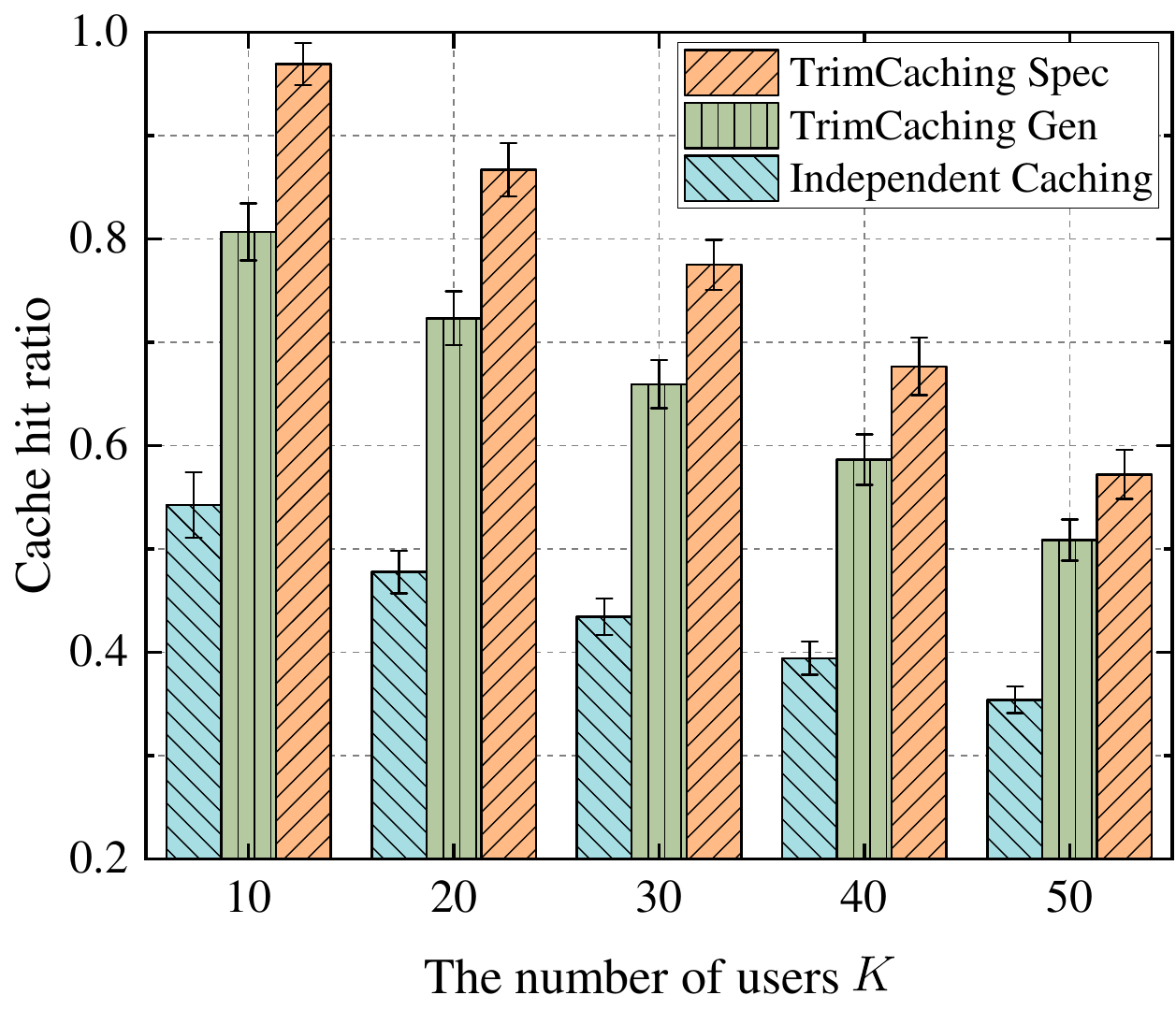}\label{fig_case2_user}}
	\caption{Cache hit ratio for the special case, where a small fixed number of shared parameter blocks is considered. The error bar denotes the standard deviation, which is the same for the subsequent figures.}\label{fig_special_case}
\end{figure*}
\subsection{Performance in the Special Case}
We first compare the algorithms in the special case. It is noted that the TrimCaching Gen algorithm can also be applied to this special case, so we use it as the benchmark.
Fig. \ref{fig_special_case} shows the cache hit ratio of these algorithms. In particular, Fig. \ref{fig_case2_capacity} illustrates the cache hit ratio by varying the storage capacity. As $Q$ increases, the cache hit ratios of all algorithms increase. This is because a larger storage capacity allows edge servers to cache more models to serve end users. Unsurprisingly, both the TrimCaching Spec algorithm and the TrimCaching Gen algorithm perform better than the Independent Caching framework due to the storage efficiency of parameter-sharing caching. Moreover, the proposed TrimCaching Spec algorithm outperforms the TrimCaching Gen algorithm. As shown earlier, the TrimCaching Spec algorithm has a constant-approximation guarantee, while the TrimCaching Gen algorithm does not. The cache hit ratio of the TrimCaching Spec algorithm is 11.93\% and 33.93\% higher than the TrimCaching Gen algorithm and the Independent Caching framework on average. A similar trend can be observed in Fig. \ref{fig_case2_edge}, which varies the number of edge servers.

Fig. \ref{fig_case2_user} investigates the effectiveness of algorithms when the number of users increases. The cache hit ratio decreases as the number of users grows because the transmission data rate decreases. However, the improved cache hit ratio of the proposed algorithms is still significant. Compared with the Independent Caching framework, the increased cache hit ratios of the proposed TrimCaching Spec algorithm and TrimCaching Gen algorithm are about 21.81\% and 15.47\%, respectively, when $K=50$. Also, the average performance gain brought by the TrimCaching Spec algorithm compared with the TrimCaching Gen algorithm is about 11.50\%, indicating that the tailored TrimCaching Spec algorithm achieves better performance in the special case and the TrimCaching Gen algorithm can still achieve competitive performance.  \par 
\subsection{Performance in the General Case}

\begin{figure*}[!ht]
	\centering
	\subfigure[Cache hit ratio v.s. $Q$, where $M=10$ and $I=30$. ]{\includegraphics[width=2in]{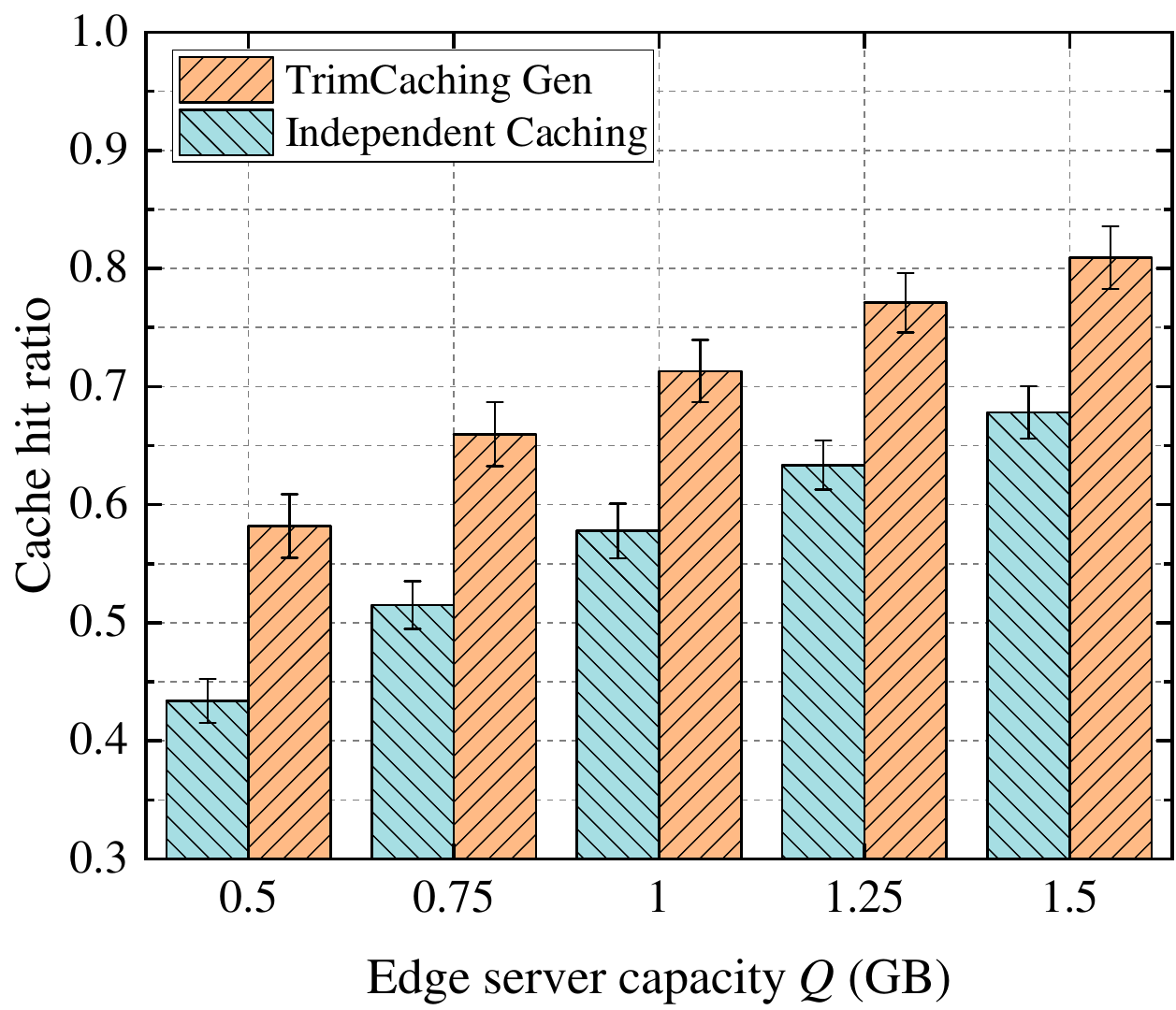}\label{fig_real_case3_capacity}}
	\quad
	\subfigure[Cache hit ratio v.s. $M$, where $Q=1 \ \text{GB}$ and $I=30$.]{\includegraphics[width=2in]{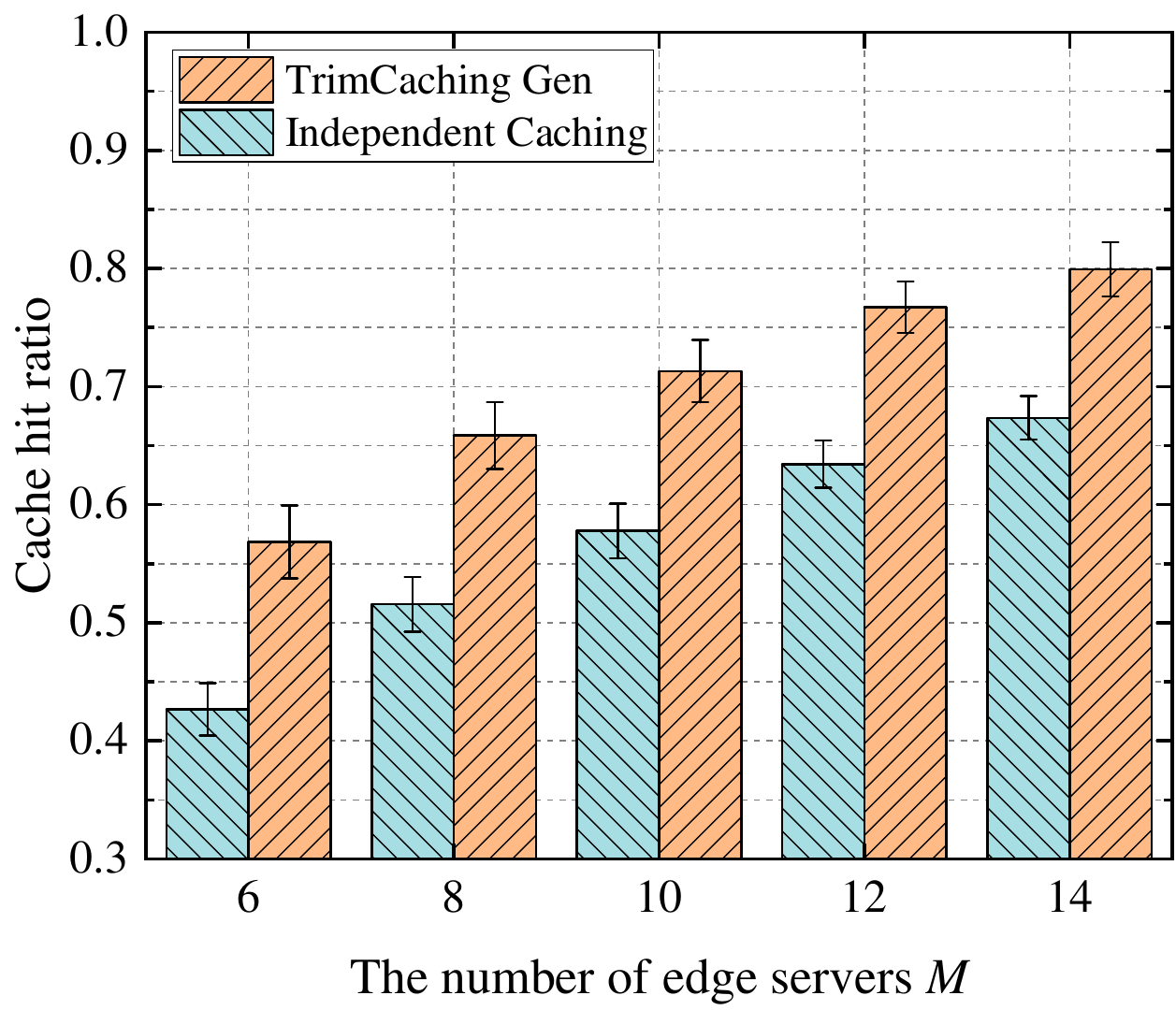}\label{fig_real_case3_edge}}
	\quad
	\subfigure[Cache hit ratio v.s. $K$, where $Q=1 \ \text{GB}$ and $M=10$.]{\includegraphics[width=2in]{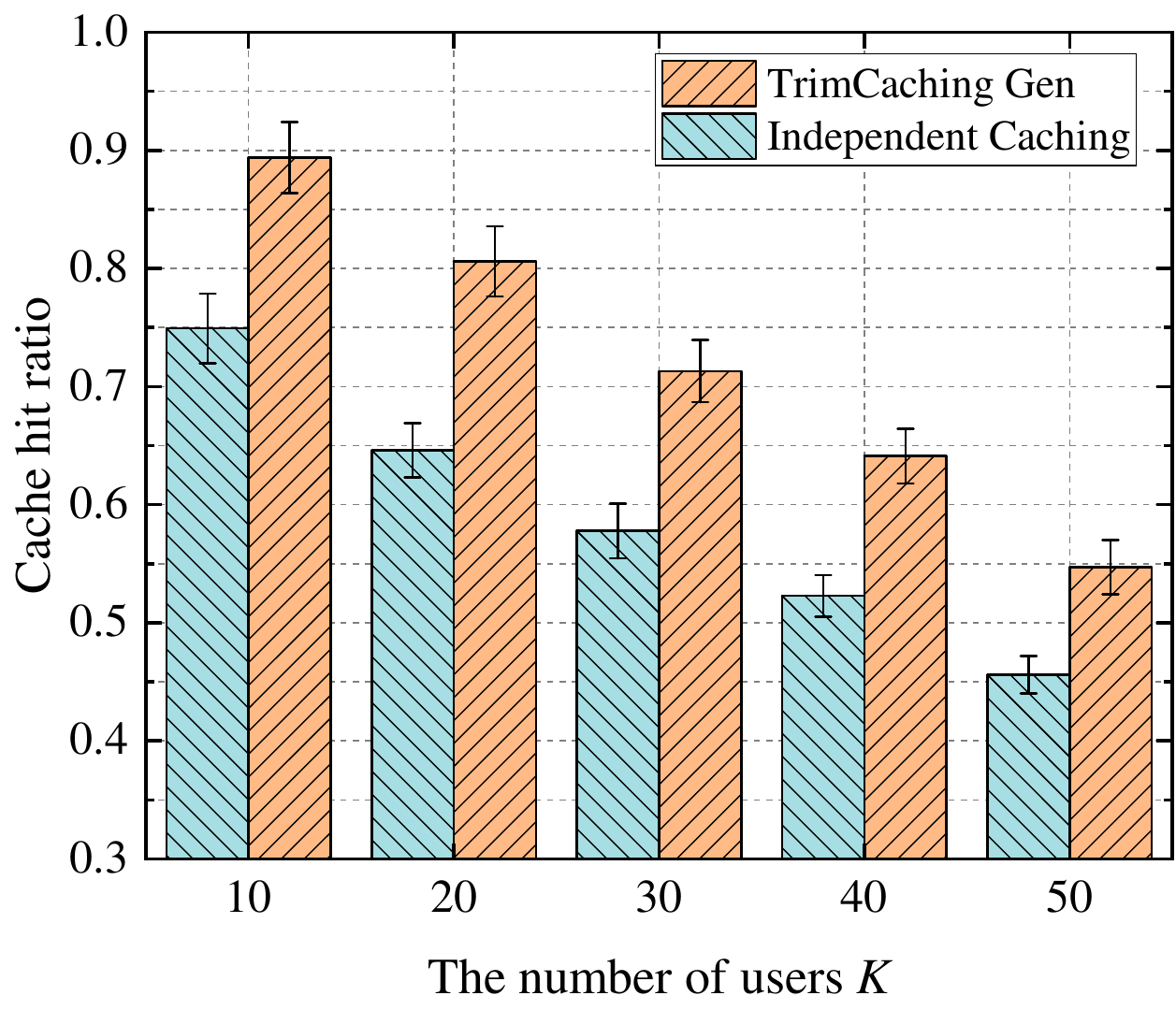}\label{fig_real_case3_user}}
	\quad
	\caption{Cache hit ratio for the general case.}\label{fig_real_case3}
\end{figure*}


In Fig. \ref{fig_real_case3}, we compare the cache hit ratio of the TrimCaching Gen algorithm and the Independent Caching framework in the general case, where parameter blocks can be shared arbitrarily. A trend similar to Fig. \ref{fig_special_case} can be observed. When increasing the storage capacity or the number of edge servers, the cache hit ratio increases because more AI models can be cached. When increasing the number of users, the cache hit ratio decreases due to the limited spectrum bandwidth. More importantly, the TrimCaching Gen algorithm significantly outperforms the Independent Caching framework in different scenarios. 


\subsection{Running Time Comparison}
In this subsection, we compare the running time among the TrimCaching Spec algorithm, the TrimCaching Gen algorithm, and the exhaustive search. Here, the exhaustive search is used to obtain the optimal solution, and the complexity of the exhaustive search is exponential, i.e., $2^{MKI}$. To run the exhaustive search efficiently, the length and width of the area are reduced to 400 m while $M$ and $K$ are set to 2 and 6. $\epsilon$ of Algorithm \ref{algorithm_DP} is set as 0 in this subsection. 

\begin{figure}[!h]
	\centering
        \subfigure[Different algorithms in the special case, where $Q=0.1\ \text{GB}$ and each user requests 9 models.]{\includegraphics[width=1.5in]{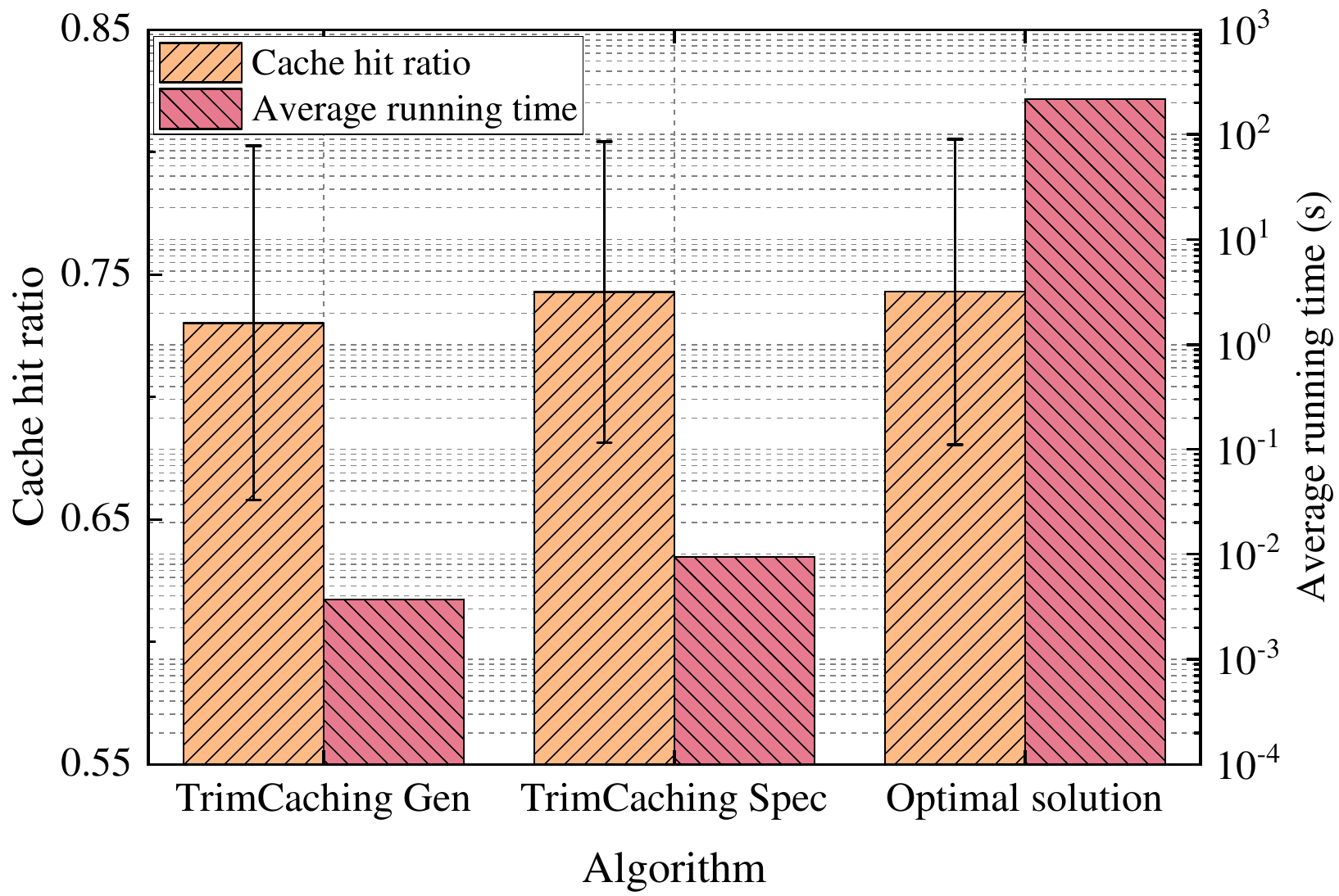}\label{fig_case2_optimal}}
	\quad
	\subfigure[Different algorithms in the general case, where $Q=0.2\ \text{GB}$ and each user requests 27 models.]{\includegraphics[width=1.5in]{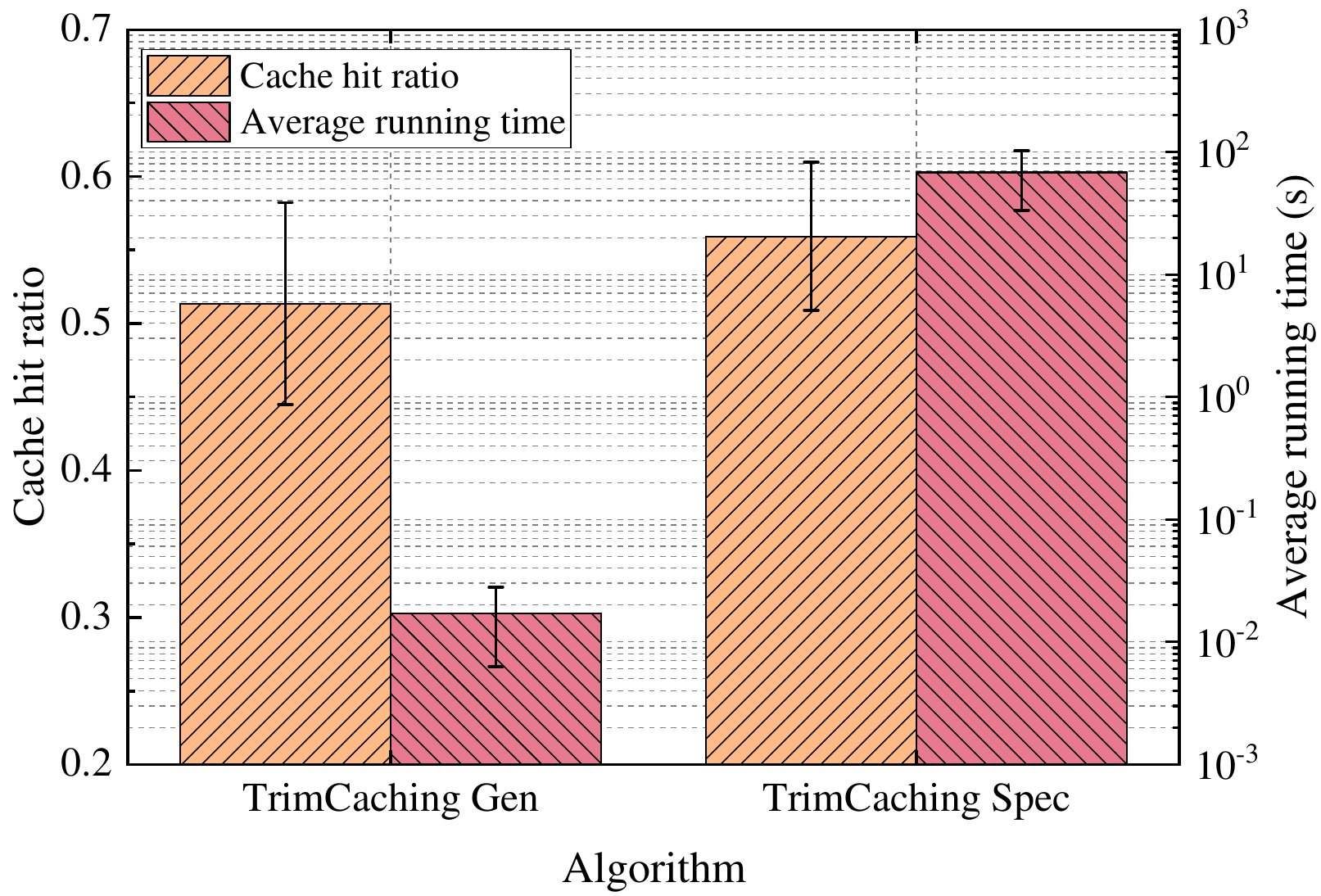}\label{fig_case2_case3}}
 \caption{The cache hit ratio and average running time of different algorithms.}
\end{figure}\label{fig_optimal}
\par 
Fig. \ref{fig_case2_optimal} compares the performance of our algorithms with the exhaustive search in the special case. On the one hand, the cache hit ratio of the TrimCaching Spec algorithm achieves the same cache hit ratio as that of the optimal solution. Moreover, the cache hit ratio of the TrimCaching Gen algorithm is only 1.3\% lower than the optimal solution. On the other hand, the TrimCaching Spec algorithm and the TrimCaching Gen algorithm are, on average, about 22,900 times and 58,000 times faster than the exhaustive search. This phenomenon demonstrates the effectiveness and efficiency of our algorithms in the special case. However, when applied to the general case, the TrimCaching Spec algorithm has exponential complexity. Fig. \ref{fig_case2_case3} compares the performance of the TrimCaching Gen algorithm with the TrimCaching Spec algorithm in the general case. The average running time of the TrimCaching Gen algorithm is about 3,900 times faster than the TrimCaching Spec algorithm, which illustrates the necessity of designing the TrimCaching Gen algorithm for the general case.
\subsection{Robustness to User Mobility}
In Fig. \ref{fig_mobility}, we demonstrate the robustness of our algorithms versus time by considering user mobility. $M$, $K$, and $Q$ are set as 10, 10, and 1 GB. We set three kinds of mobility patterns for pedestrians, bikes, and vehicles. The initial speeds of users are randomly generated from $\left[0.5,1.8\right]\ \text{m/s}$, $\left[2,8\right]\ \text{m/s}$, and $\left[5.5,20\right]\ \text{m/s}$. In each time slot, the acceleration is generated from $\left[-0.3,0.3\right]\ \text{m/}\text{s}^\text{2}$, $\left[-1,1\right]\ \text{m/}\text{s}^\text{2}$, and $\left[-3,3\right]\ \text{m/}\text{s}^\text{2}$. The initial orientations are uniformly distributed in $\left[0,\pi\right]\ \text{rad}$. The angular velocities range in $\left[-\frac{\pi}{4},\frac{\pi}{4}\right]\ \text{rad/s}$, $\left[-\frac{\pi}{3},\frac{\pi}{3}\right]\ \text{rad/s}$, and $\left[-\frac{\pi}{2},\frac{\pi}{2}\right]\ \text{rad/s}$. Users are assumed to change their speeds/orientations at the beginning of each time slot, and the length of each time slot is 5 s. \par 

\begin{figure}[h]
\centerline{\includegraphics[width=2in]{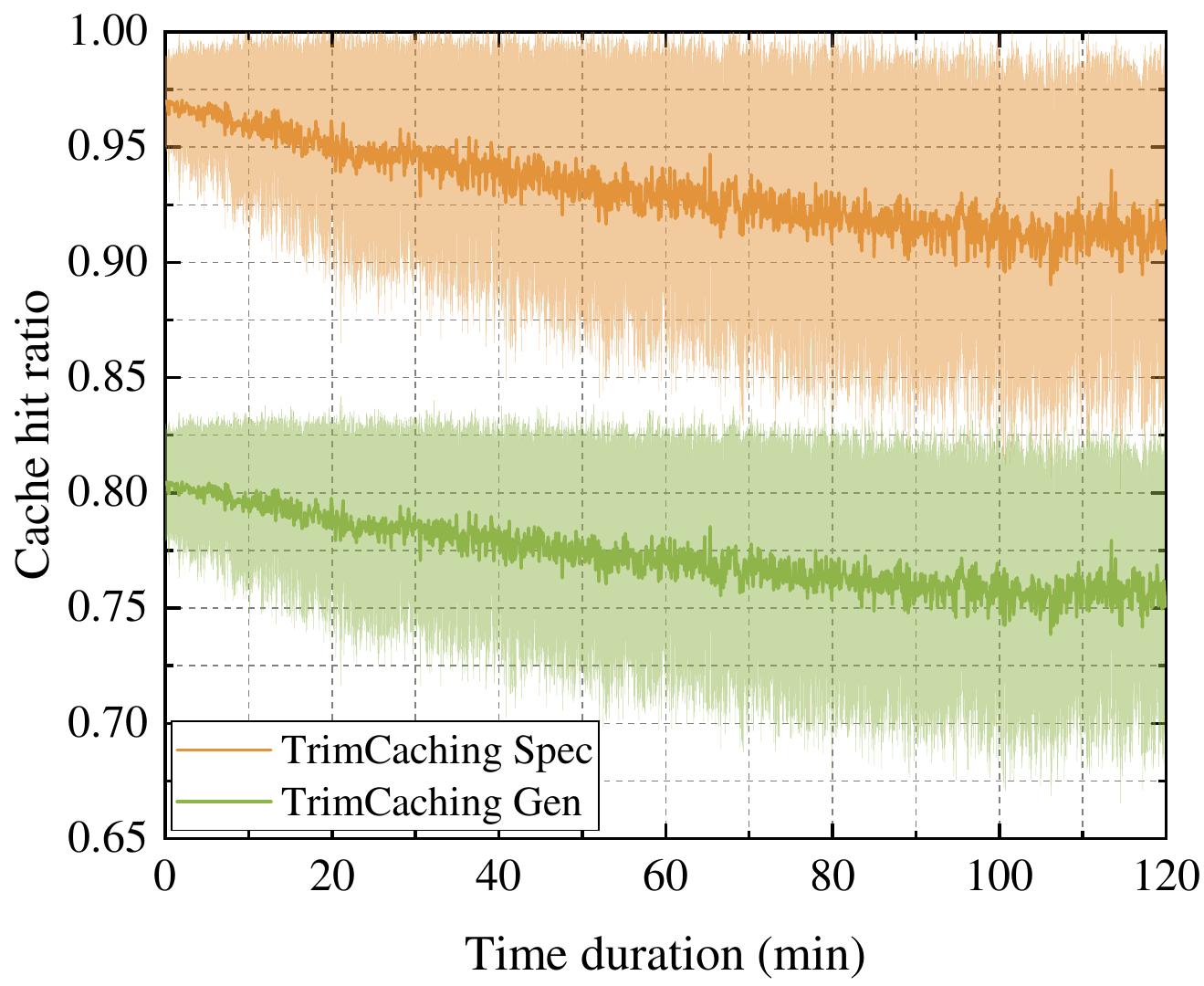}}
	\caption{The cache hit ratio varies over time due to user mobility.
 }
	\label{fig_mobility}
\end{figure}
Fig. \ref{fig_mobility} shows the robustness of the TrimCaching Spec algorithm and the TrimCaching Gen algorithm in the special case. While the cache hit ratio degrades over time due to user mobility, the performance of the TrimCaching Spec algorithm and the TrimCaching Gen algorithm only degrades by about 6.43\% and 5.42\% over 2 h. This shows the effectiveness of the TrimCaching framework in the long run even though user mobility has not been explicitly considered in our problem formulation. This also implies that model replacement does not need to be re-conducted frequently, thereby saving backbone bandwidth resources. 

\section{Conclusions}
In this paper, we proposed the TrimCaching framework for edge AI model caching. Specifically, by taking advantage of the shared parameter blocks among AI models for efficient storage, we investigated the parameter-sharing AI model placement problem. Considering multi-edge scenarios, the model placement problem was formulated to maximize the cache hit ratio under the constraints of service latency requirements and storage capacity. Then, we identified this problem as a submodular maximization problem with submodular constraints, revealing that no polynomial-time algorithm can solve it with a constant approximation ratio. To tackle this challenge, we first investigated the special case with a small fixed number of shared parameter blocks independent of the problem scale. We developed a polynomial-time algorithm with $\left(1-\epsilon\right)/2$-approximation guarantee. After that, we addressed the general case by devising a greedy algorithm that is also efficient and effective. It has been shown that the proposed TrimCaching framework significantly improves the cache hit ratio compared with traditional content placement strategies without considering shared parameter blocks among AI models. We hope this work will inspire further research on edge AI model caching, which has the potential to become a vital component in 6G edge intelligence.
\bibliographystyle{IEEEtran}
\bibliography{IEEEabrv,reference}
\end{document}